\documentclass{article}

\usepackage{microtype}
\usepackage{graphicx}
\usepackage{subcaption}
\usepackage{booktabs} 
\usepackage{bm}
\usepackage{hyperref}

\usepackage[preprint]{icml2026}

\usepackage{amsmath}
\usepackage{amssymb}
\usepackage{mathtools}
\usepackage{amsthm}

\usepackage[capitalize,noabbrev]{cleveref}

\theoremstyle{plain}
\newtheorem{theorem}{Theorem}[section]
\newtheorem{proposition}[theorem]{Proposition}
\newtheorem{lemma}[theorem]{Lemma}

\theoremstyle{definition}
\newtheorem{definition}[theorem]{Definition}

\theoremstyle{remark}

\newcommand{\subfiglabel}[1]{\refstepcounter{subfigure}\label{#1}}
\makeatletter
\renewcommand{\p@subfigure}{\thefigure} 
\makeatother
\usepackage{hyperref}

\newcommand{\subfigtarget}[1]{\hypertarget{#1}{}}

\makeatletter
\renewcommand{\subfiglabel}[1]{%
  \refstepcounter{subfigure}%
  \begingroup
    \protected@edef\@currentHref{#1}
    \label{#1}%
  \endgroup
}
\makeatother

\usepackage[textsize=tiny]{todonotes}

\icmltitlerunning{Efficient Learned Image Compression without Entropy Coding}

\begin{document}

\twocolumn[
  \icmltitle{Efficient Learned Image Compression without Entropy Coding}



  \icmlsetsymbol{equal}{*}

  \begin{icmlauthorlist}
    \icmlauthor{Hao Cao}{thu}
    \icmlauthor{Wenqi Guo}{thu2}
    \icmlauthor{Zhijin Qin}{thu,thu3}
    \icmlauthor{Jungong Han}{thu2,thu4}
  \end{icmlauthorlist}

  \icmlaffiliation{thu}{Department of Electronic Engineering, Tsinghua University}
  \icmlaffiliation{thu2}{Department of Automation, Tsinghua University}
  \icmlaffiliation{thu3}{State Key Laboratory of Space Network and Communications}
  \icmlaffiliation{thu4}{Beijing National Research Center for Information Science and Technology}

  \icmlcorrespondingauthor{Wenqi Guo}{wenqiguo@mail.tsinghua.edu.cn}

  \icmlkeywords{Image compression}

  \vskip 0.3in
]



\printAffiliationsAndNotice{}  


\begin{abstract}
Entropy coding is widely used in typical learned image compression (LIC) that converts latents into a compact bitstream.
However, entropy coding is typically sequential and becomes the coding latency bottleneck.
To overcome it, we present \textbf{E}ntropy-Coding \textbf{F}ree \textbf{L}earned \textbf{I}mage \textbf{C}ompression (EF-LIC), a multi-rate framework that generates compact representation by removing statistical and correlation redundancy with low coding latency.
First, we introduce unconstrained vector quantization and prove that its index distribution approaches the maximum-entropy bound, yielding minimal statistical redundancy.
Second, we propose a context-conditioned autoregressive transform that directly reparameterizes the latents to reduce inter-dependency.
Theoretical analysis shows that EF-LIC can remove correlation redundancy as effectively as typical LIC with entropy coding, leading to comparable compression performance.
Experiments show EF-LIC achieves up to 67.86\% bitrate reduction over MS-ILLM on Kodak with LPIPS.
Ablation studies further show EF-LIC matches the compression performance of its entropy-coding based variant while achieving over $3\times$ faster encoding and $5\times$ faster decoding.
\end{abstract}

\section{Introduction}

\begin{figure}[t]
  \centering
  \begin{subfigure}[c]{0.95\linewidth}
    \includegraphics[width=0.9999\linewidth]{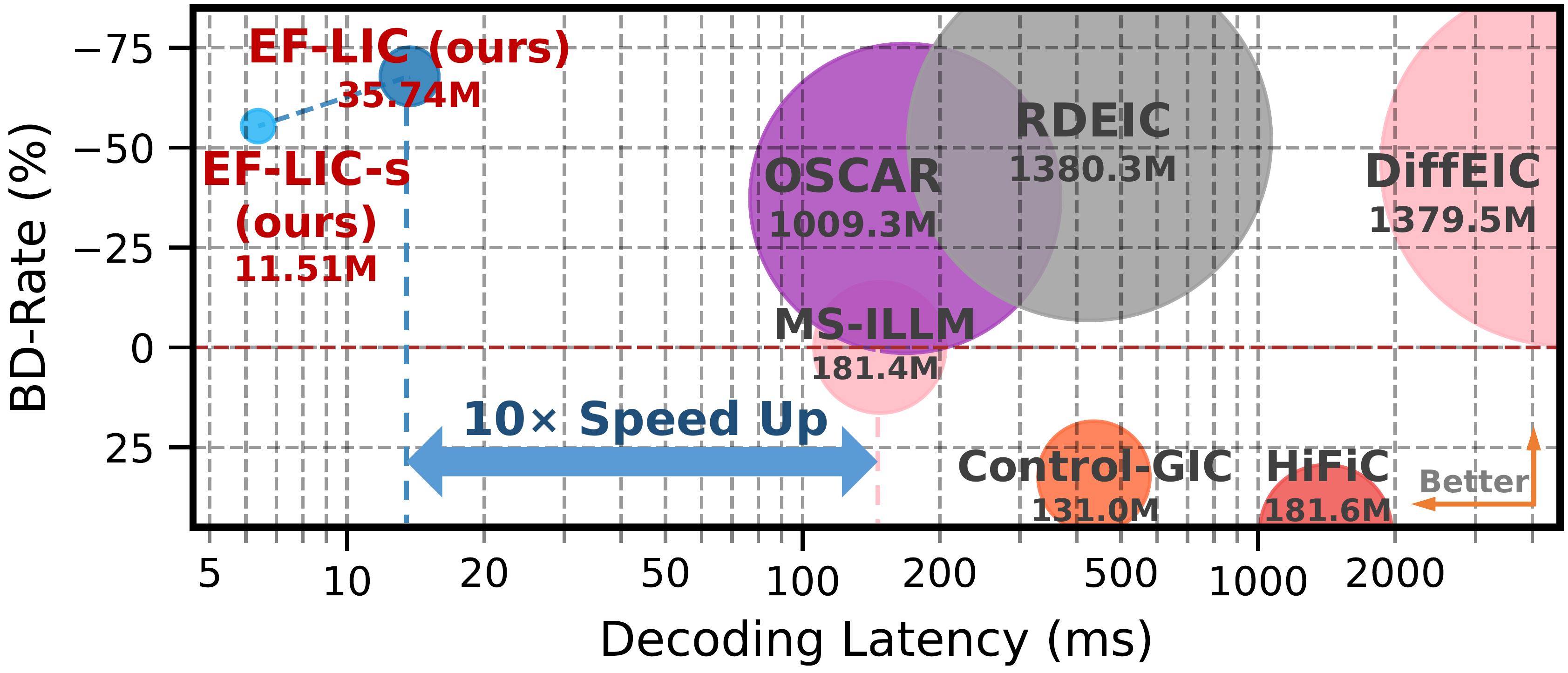}
    \caption{Performance comparison with LIC on LPIPS.}
    \label{fig:bd1}
  \end{subfigure}
  \hfill
  \begin{subfigure}[c]{0.95\linewidth}
    \includegraphics[width=0.9999\linewidth]{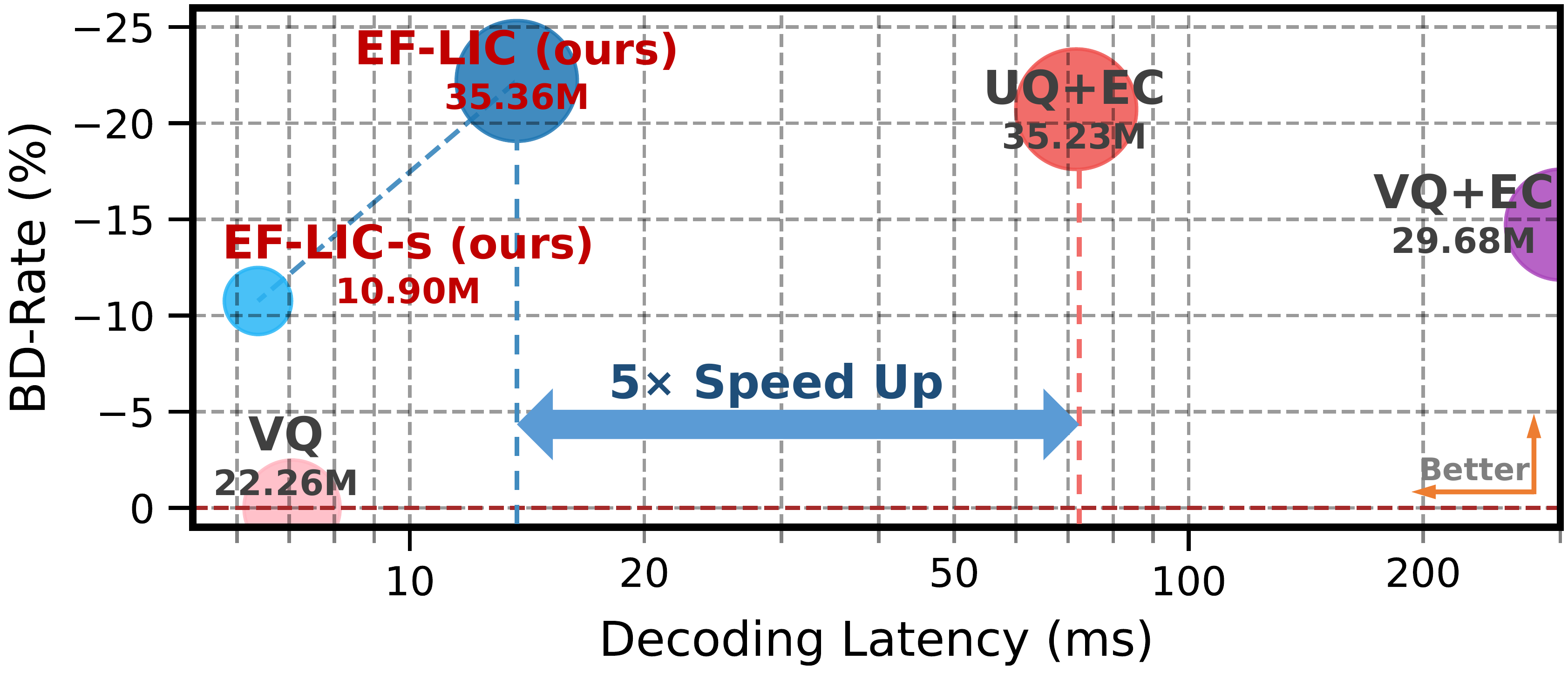}
    \caption{Ablation studies with different variants.}
    \label{fig:bd2}
  \end{subfigure}
  \caption{(a) EF-LIC is the proposed method, which achieves high performance and low decoding latency. EF-LIC-s is its lightweight variant.  (b) Comparison of EF-LIC with its variants. ``UQ+EC'' denotes typical LIC with uniform quantization (UQ), context modeling, and entropy coding. ``VQ'' is the baseline method without inter-latent decorrelation. ``VQ+EC'' denotes context modeling and entropy coding for discrete VQ indices. All of them share the same module structure and distortion metrics. Results are reported on Kodak using LPIPS, evaluated with one NVIDIA A100 GPU.}
  \label{fig:bd}
\end{figure}

Lossy image compression~\cite{jpeg} seeks a compact representation that minimizes bitrate while preserving high quality.
To this end, information theory~\cite{shannon1948mathematical} offers a principled lens that views compression as redundancy removal, where redundancy can be divided into (i) statistical redundancy and (ii) correlation redundancy.
Statistical redundancy arises when the distribution of the quantized latents follows a non-uniform distribution.
In this case, entropy coding \cite{huffman} could assign shorter codewords to more probable latents, reducing the expected number of bits.
Correlation redundancy arises when latents are statistically dependent across positions, making some symbols predictable from their context.
In learned image compression (LIC), a context model~\cite{lic-joint} often implemented via context-conditional autoregressive transform~\cite{lic-he2021checkerboard, lic-elic, hpcm}, captures inter-latent dependency through a conditional distribution.
Entropy coding can then exploit this conditional distribution to reduce correlation redundancy.

Therefore, entropy coding plays a central role in typical LIC, as it converts latents into a compact bitstream.
However, its complex and sequential control flow is hard to parallelize, so entropy coding is often implemented on CPUs and can become the primary bottleneck in end-to-end latency.
For example, range Asymmetric Numeral Systems (rANS)~\cite{rans} can take more than 100\,ms, exceeding the combined runtime of other modules in the LIC pipeline.
Meanwhile, simplifying or removing entropy coding typically incurs a substantial performance degradation.
For instance, Huffman coding~\cite{huffman} is faster but far less efficient than rANS.
Prior LIC methods, such as COIN~\cite{coin} and OSCAR~\cite{oscar}, exclude entropy coding, but they either only achieve the performance of simple codecs such as JPEG~\cite{jpeg} or incur prohibitive inference cost.
These issues motivate a natural question: \textit{How can we perform image compression without entropy coding while preserving high compression efficiency?}

To address this question, we propose \textbf{E}ntropy-coding \textbf{F}ree \textbf{L}earned \textbf{I}mage \textbf{C}ompression (EF-LIC), a multi-rate framework that achieves high compression efficiency with low coding latency.
Following information theory \cite{shannon1948mathematical}, the first challenge is to remove statistical redundancy, which amounts to learning latents whose entropy approaches the maximum.
We introduce unconstrained vector quantization (VQ) \cite{vq-vae}, and prove that the index sequence from VQ exhibits minimal statistical redundancy.
The second challenge is to remove correlation redundancy, which amounts to eliminating repeated information across latents.
To avoid predicting the conditional distribution as typical LIC, we propose representation-domain latent decorrelation, which contains a context-conditioned autoregressive transform to directly reparameterize the latents with reduced correlation.
These two steps are GPU friendly and enable EF-LIC to break the latency--performance trade-off.
We also adopt residual vector quantization (RVQ)~\cite{speech-dac} to enable flexible multi-rate compression.

We evaluate EF-LIC under perceptual metrics, which better reflect visual quality than pixel-wise metrics such as PSNR. We report BD-rate~\cite{bd-rate} to calculate bitrate reduction under the same distortion.
As shown in~\cref{fig:bd1}, EF-LIC achieves 67.86\% bitrate reduction over MS-ILLM~\cite{ms-illm} evaluated with LPIPS~\cite{lpips} on Kodak.
Ablation studies in~\cref{fig:bd2} show that EF-LIC matches the performance of its entropy-coding based variant with the same architecture, while delivering over $5\times$ faster decoding.

Our contributions are summarized as follows.
\begin{itemize}
  \item We propose \textbf{E}ntropy-coding \textbf{F}ree \textbf{L}earned \textbf{I}mage \textbf{C}ompression (EF-LIC), a multi-rate LIC achieving both high compression performance and low latency. It combines unconstrained VQ to produce high-entropy discrete indices and a context-conditional autoregressive transform that reparameterizes the latents.
  \item We provide theoretical analyses that (i) unconstrained VQ produces discrete indices with minimal statistical redundancy as the model approaches the minimum reconstruction distortion, and (ii) the context-conditional autoregressive transform achieves the same compression performance as typical LIC with entropy coding.
  \item Experiments show that EF-LIC both improves compression performance and decreases latency over prior methods. It also matches the compression performance of its entropy-coding-based variant while providing a significant encoding and decoding speedup.
\end{itemize}
\vspace{-2pt}
\section{Related Work}
\vspace{-2pt}
\paragraph{Learned Image Compression.}
Pioneering work~\cite{lic-balle2018variational} introduces variational autoencoders~\cite{vae} for LIC. Subsequent studies outperformed traditional codecs such as JPEG~\cite{jpeg} and VVC~\cite{vtm}. Several studies improve transform coding~\cite{lic-attention, lic-tcm, lalic}, while others advance context modeling~\cite{lic-lu2025learned, hpcm}. 
Notably, a context model translates inter-latent dependency into a conditional probability exploited by entropy coding to reduce the expected bitstream length.
An early study~\cite{lic-balle2018variational} introduces hyperpriors to model the conditional distributions of the latents.
Later, autoregressive models~\cite{lic-joint, lic-attention} partition the latents into groups and model inter-group dependency. 
Afterwards, more studies improve either the grouping strategy \cite{lic-he2021checkerboard, lic-elic, dcvc-dc, hpcm} or the model capacity \cite{lic-mlic, lic-lu2025learned} to reduce inter-latent dependency, but still rely on entropy coding for bitstream generation.
\vspace{-2pt}
\paragraph{Generative Image Compression.}
Early approaches are mainly optimized for pixel-level distortion (e.g., PSNR). However, these objectives often correlate poorly with human perception~\cite{perception}.
Motivated by this mismatch, several works~\cite{gic-1, lic-poelic} aim to better align LIC optimization with visual quality.
HiFiC~\cite{hific} leverages GANs~\cite{gan} to improve the visual quality of reconstructions.
MS-ILLM~\cite{ms-illm} further refines the discriminator architecture to improve distributional alignment between reconstructions and natural images.
Building on them, subsequent studies explore VQ-GAN~\cite{vq-gan} for LIC, achieving high visual quality at extremely low bitrates~\cite{vq-mao2024extreme, vq-gic, cgic, vq-DLF}.
Diffusion-based generative compression has been explored in several works~\cite{ddpm, diffusion-perco, diffusion-StableCodec,diffusion-xue2025one, rdeic} to reconstruct high-quality images. However, the cost of diffusion inference limits practical deployment.

\begin{figure*}[t]
  \centering
  \subfigtarget{fig:framework-a}%
  \subfigtarget{fig:framework-b}%
  \subfigtarget{fig:framework-c}%
  \includegraphics[width=0.975\linewidth]{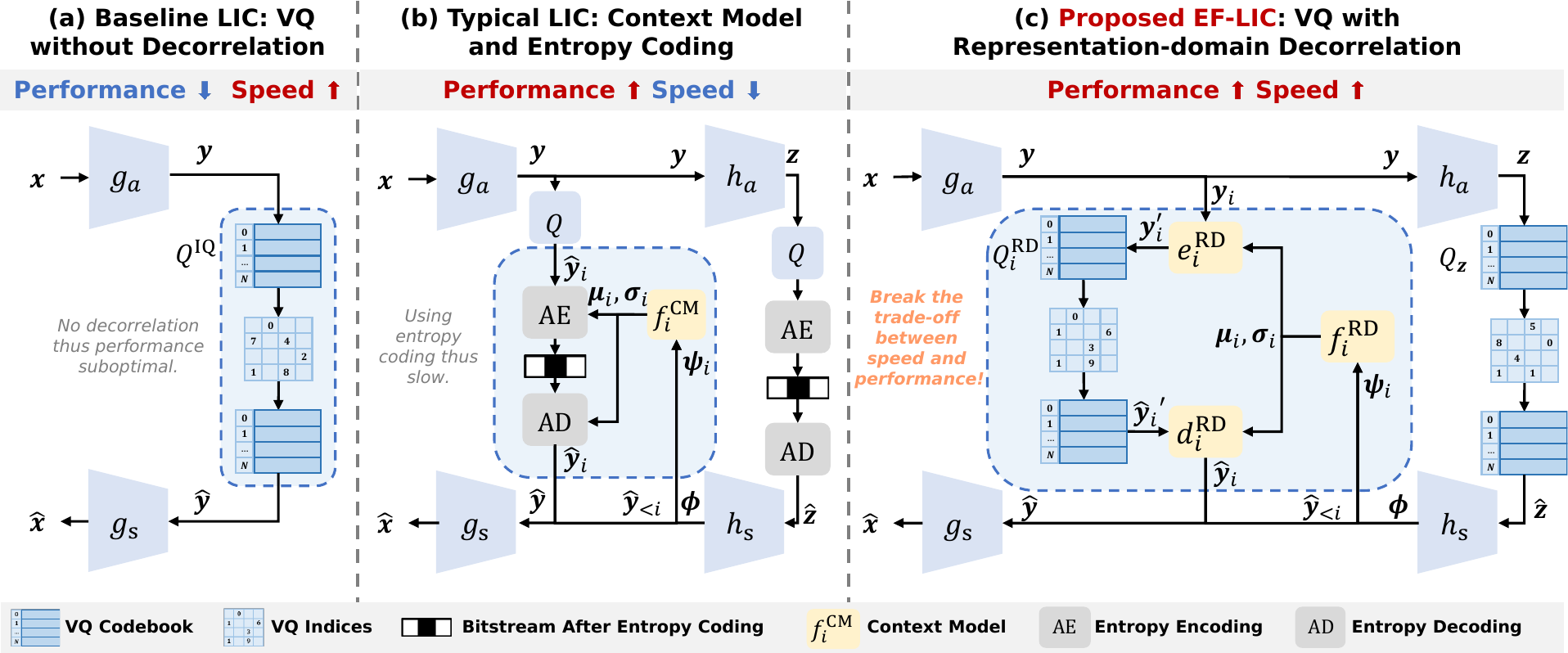}
  \caption{(a) Left: a VQ-only baseline that is fast but less efficient due to missing inter-latent decorrelation. (b) Middle: a typical entropy-coded LIC pipeline, where the context model $\bm{f}^{\mathrm{CM}}$ outputs conditional probabilities for AE and AD. (c) Right: the proposed EF-LIC, which applies a context-conditional transform to produce low-correlation latents and uses unconstrained VQ to remove redundancy.}
  \label{fig:framework}
  \setcounter{subfigure}{0}
  \subfiglabel{fig:framework-a}
  \subfiglabel{fig:framework-b}
  \subfiglabel{fig:framework-c}
\end{figure*}

\vspace{-2pt}
\paragraph{Image Compression without Entropy Coding.}
There have been works of LIC without entropy coding \cite{lic-full}.
COIN~\cite{coin} adopts implicit neural representations without introducing entropy coding, but its compression performance remains comparable only to JPEG-level codecs.
OSCAR~\cite{oscar} engages diffusion models while excluding entropy coding, but incurs prohibitive computational cost.
Another line of work uses vector quantization (VQ)~\cite{vq-vae} to map continuous latents to discrete code indices, so the compressed representation reduces to an index sequence. Nevertheless, they \cite{vq-mao2024extreme,cgic} typically overlook inter-latent dependency, resulting in suboptimal compression efficiency. While some studies \cite{vq-pq-mim,vq-gic,vq-mcquic} reduce inter-latent correlation after vector quantization, they still rely on entropy coding.
\section{Methods}

\subsection{Overview Architecture of EF-LIC}
\label{sec:architecture}

Unlike existing LIC shown in \cref{fig:framework-a,fig:framework-b}, EF-LIC removes redundancy via VQ and  context-conditioned transforms to generate compact representation directly without entropy coding.
As shown in \cref{fig:framework-c}, EF-LIC encodes an input image $\bm{x}\in\mathbb{R}^{3\times H\times W}$ into a latent $\bm{y}=g_a(\bm{x})$ with a downsampling factor of $f_y$. EF-LIC adopts a hyperprior~\cite{lic-balle2018variational} branch to extract side information as $\bm{z}=h_a(\bm{y})$ with a downsampling factor $f_z$. The hyperprior is then quantized as $\hat{\bm{z}}=Q_{\bm{z}}(\bm{z})$ and decoded into a context feature $\bm{\phi}=h_s(\hat{\bm{z}})$ to assist the decorrelation of $\bm{y}$.

Notably, we propose representation-domain decorrelation (RD), which generates a new latent directly from $\bm{y}$  instead of predicting a conditional probability distribution. 
Specifically, the latent $\bm{y}$ is partitioned into $N$ groups as $(\bm{y}_1,\ldots,\bm{y}_N)$.
For the $i$-th group, a reference context $\bm{\psi}_i$ is first constructed from the decoded groups $\hat{\bm{y}}_{<i}$ and the context feature $\bm{\phi}$: $\bm{\psi}_i=\mathrm{concat}\!\left(\hat{\bm{y}}_{<i}, \bm{\phi}\right)$, where $\mathrm{concat}(\cdot,\cdot)$ denotes concatenation and $\hat{\bm{y}}_{<i}=(\hat{\bm{y}}_1,\ldots,\hat{\bm{y}}_{i-1})$.
Subsequently, a context extractor $f_i^{\mathrm{RD}}(\cdot)$ transforms the reference context $\bm{\psi}_i$ into the context parameters $(\bm{\mu}_i,\bm{\sigma}_i)$ as
\begin{equation}
(\bm{\mu}_i, \bm{\sigma}_i)=f^{\mathrm{RD}}_i(\bm{\psi}_i).
\end{equation}
A context-conditional encoder $e^{\mathrm{RD}}_i(\cdot;\cdot)$ reparameterizes the current group $\bm{y}_i$ via an affine projection as:
\begin{equation}
\bm{y}'_i = e^{\mathrm{RD}}_i(\bm{y}_i;\bm{\mu}_i,\bm{\sigma}_i)
= \bm{\sigma}_i^{-1}\odot(\bm{y}_i-\bm{\mu}_i),
\end{equation}
where $\odot$ is elementwise multiplication. 
Then $\bm{y}'_i$ is quantized as $\hat{\bm{y}}'_i = Q_i^{\mathrm{RD}}(\bm{y}'_i)$, where $Q_i^{\mathrm{RD}}(\cdot)$ denotes a group-wise vector quantizer.
The quantized latent $\hat{\bm{y}}_i$ is reconstructed from $\hat{\bm{y}}'_i$ using a context-conditional decoder $d_i^{\mathrm{RD}}(\cdot;\cdot)$ as
\begin{equation}
\hat{\bm{y}}_i = d^{\mathrm{RD}}_i(\hat{\bm{y}}'_i;\bm{\mu}_i,\bm{\sigma}_i)
= \bm{\sigma}_i\odot\hat{\bm{y}}'_i+\bm{\mu}_i.
\end{equation}
Note that $e_i^{\mathrm{RD}}(\cdot;\cdot)$ and $d_i^{\mathrm{RD}}(\cdot;\cdot)$ are not restricted to a specific form. 
EF-LIC realizes them as affine projections for efficiency.
The reconstructed image $\hat{\bm{x}}$ is decoded from the quantized latent $\hat{\bm{y}}$ as $\hat{\bm{x}}=g_s(\hat{\bm{y}})$, where $\hat{\bm{y}} = (\hat{\bm{y}}_1,\ldots,\hat{\bm{y}}_N)$.

In practice, EF-LIC adopts the modules in~\cite{dcvc-rt} for $g_a(\cdot)$, $g_s(\cdot)$, $h_a(\cdot)$, and $h_s(\cdot)$, uses its context model to realize $f_i^{\mathrm{RD}}(\cdot)$, and partitions the latent $\bm{y}$ into four quadtree-based groups $(\bm{y}_1,\bm{y}_2,\bm{y}_3,\bm{y}_4)$.
To support multiple target rates, EF-LIC realizes $Q_{\bm{z}}(\cdot)$ and $\{Q_i^{\mathrm{RD}}(\cdot)\}_{i=1}^4$ as residual vector quantizers (RVQ)~\cite{speech-dac}.
We group these RVQ-based quantizers into a quantizer set $\mathcal{Q}$, where every RVQ employs the same number of codebooks $m$.
The bitrate in bits per pixel (BPP) is
\begin{equation}
  \mathrm{BPP}=\frac{m}{f_y^2}\left(\frac{f_y^2}{f_z^2}\log K_{\bm{z}}+\frac{1}{N}\sum_{i=1}^N \log K_i\right).
\end{equation}
Here, $K_{\bm{z}}$ and $K_i$ denote the numbers of codewords per codebook in $Q_{\bm{z}}(\cdot)$ and $Q_i^{\mathrm{RD}}(\cdot)$, respectively, and $\log$ denotes $\log_2$ throughout the paper.
Moreover, we define a discrete set of RVQ codebook counts $\mathcal{M}=\{m_1,\dots,m_M\}$.
For each $m\in\mathcal{M}$, we construct a corresponding quantizer set $\mathcal{Q}^{(m)}$ in which every RVQ uses $m$ codebooks, and we select $\mathcal{Q}^{(m)}$ at inference time to obtain the desired BPP.
We provide the detailed implementation of EF-LIC and its bitstream packing method in~\cref{app:architecture}.

Since all components of EF-LIC are parallelizable, the entire codec can run efficiently on GPUs.
Next, we present a theoretical analysis showing that EF-LIC achieves compression performance comparable to LIC with entropy coding.

\subsection{Maximum-Entropy Probabilistic Shaping}

In this subsection, we analyze the statistical redundancy of the indices produced by VQ in EF-LIC.
Following information theory \cite{shannon1948mathematical}, we measure statistical redundancy using the entropy
$H(X) \triangleq -\sum_x P_X(x)\log P_X(x)$.
Here $X \sim P_X$ is a discrete random variable.
For any lossless representation of $X$, the expected encoded bitstream length $R$ satisfies $R \ge H(X)$.
Entropy coding exploits a non-uniform $P_X$ to approach this bound.

In EF-LIC, VQ indices are transmitted with fixed-length symbols, so efficiency is governed by how closely the index sequence approaches the maximum-entropy limit.
We define $J \in \{1,\ldots,K\}^n$ as the index sequence after VQ, where $n$ is the sequence length and $K$ is the codebook size.
Since $J$ is a length-$n$ sequence with $K$ possible values at each position, it can take at most $K^n$ distinct outcomes. Therefore, $H(J)\le \log(K^n)=n\log K$, with equality when $J$ is uniform over $\{1,\ldots,K\}^n$.
Under fixed-length coding, this yields an available budget of $n\log K$ bits to represent $J$.
We define the normalized entropy gap as
\begin{equation}
\label{eq:delta_H}
\Delta H \triangleq \frac{n\log K - H(J)}{n\log K}.
\end{equation}
This ratio quantifies the fraction of the fixed-length budget that is statistically redundant.
In particular, $\Delta H=0$ holds exactly when $H(J)=n\log K$, meaning that the indices achieve the maximum entropy bound.

Prior empirical studies of VQ-based codecs report that $\Delta H$ tends to decrease toward zero as training converges \cite{image-rvq,speech-dac}.
A common design across these systems is unconstrained VQ combined with end-to-end optimization for reconstruction quality.
Motivated by these findings, we provide a theoretical explanation for why such unconstrained VQ drives the index entropy toward its maximum under fixed-length coding.

\begin{proposition}[Maximum-Entropy Probabilistic Shaping]
\label{thm:distortion_entropy}
For a codec employing an unconstrained VQ with $K$ codewords and target rate $R = \log K$, any distortion-optimal quantizer $Q^*$ must satisfy the entropy constraint:
\begin{equation}
  Q^* \in \arg\min_{Q:\,H(J)\le R} \mathbb{E}\left[d(X,\hat X)\right]
  \;\Longrightarrow\;
  \Delta H = 0,
\end{equation}
where $J^*$ denotes the latent index produced by $Q^*$.
\end{proposition}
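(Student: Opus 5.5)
The plan is a contradiction argument. We suppose a distortion-optimal quantizer $Q^*$ has $H(J^*)<\log K$, so $\Delta H>0$, and then build a quantizer $\tilde Q$ that still satisfies $H(\tilde J)\le R=\log K$ but has strictly smaller expected distortion. Since the constraint $H(J)\le \log K$ holds automatically for every $K$-codeword quantizer (any $K$-valued variable has entropy at most $\log K$), the feasible set is \emph{all} quantizers with $K$ codewords. The statement therefore reduces to this: every distortion-optimal unconstrained VQ has a uniform index distribution, read per symbol as in \eqref{eq:delta_H} with $n=1$, or i.i.d.\ over positions for general $n$.

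\textbf{Step 1 (support use).} I would first show that every codeword of $Q^*$ has positive probability. If some codeword $c_k$ had $P(J^*=k)=0$, we could move it to a point in the support of $X$ that is not a codeword and has positive local mass. Nearest-neighbour reassignment of a neighbourhood of that point to $c_k$ then strictly lowers the distortion, provided $X$ has a density, or at least more than $K-1$ support points, and $d$ is a strictly increasing function of a metric.

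\textbf{Step 2 (equalizing cell masses).} Suppose all masses $p_k=P(J^*=k)$ are positive but not all equal to $1/K$. I would use the Lloyd optimality conditions (nearest-neighbour cells, centroid codewords) together with a perturbation argument. Take the cell $V_a$ with the largest mass and the cell $V_b$ with the smallest. We split $V_a$ and merge $V_b$ into an adjacent cell, which redistributes the codeword budget toward the high-mass region and keeps $K$ codewords in total.

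The key estimate is a comparison of marginal distortion changes. Under high-resolution (Bennett/Gersho) asymptotics, the distortion of cell $k$ scales as $p_k\,\mathrm{vol}(V_k)^{2/d}$. Minimizing $\sum_k p_k \mathrm{vol}(V_k)^{2/d}$ subject to $\sum_k \mathrm{vol}(V_k)$ fixed and the codeword count fixed is a Lagrangian problem. The resulting stationarity condition yields optimal point density $\lambda \propto f_X^{d/(d+2)}$, and hence cell masses that are asymptotically equal. Equivalently, any unequal-mass configuration violates stationarity, so a small reallocation decreases distortion. This contradicts optimality, so $p_k=1/K$ and $H(J^*)=\log K$. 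For sequences, applying the same argument to the joint index over $\{1,\dots,K\}^n$ gives $H(J^*)=n\log K$ and thus $\Delta H=0$.

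\textbf{Main obstacle.} The hard step is Step 2. Exact Lloyd-optimal quantizers in finite, non-asymptotic regimes do not in general have exactly equal cell probabilities; for a scalar Gaussian with small $K$ the optimal cells have unequal mass. So the implication cannot hold literally without extra assumptions. I would therefore make the hypothesis explicit: either a high-resolution or large-$K$ regime, where equal masses hold up to $o(1)$ and $\Delta H\to 0$, or an idealized unconstrained VQ whose encoder $g_a$ is itself trained jointly. In the latter case the latent distribution can be reshaped by an invertible map, and any non-uniform index law can be pushed to a uniform one without increasing distortion. I would try the joint-transform route first, since it matches the end-to-end setting. The argument there is that a measure-preserving reparametrization of the latent spreads mass evenly while a codec with more effective resolution where it is needed strictly improves distortion. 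Failing that, I would state the result as the asymptotic statement $\Delta H\to 0$.
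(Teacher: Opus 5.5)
\textbf{There is a genuine gap: Step 2 fails at its central inference, and neither fallback repairs it.}

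Take your own high-resolution setting, with $d$ the dimension as in your notation. The optimal point density $\lambda\propto f_X^{d/(d+2)}$ gives $\mathrm{vol}(V_k)\approx 1/(K\lambda(c_k))$. Hence $p_k\approx f_X(c_k)/(K\lambda(c_k))\propto f_X(c_k)^{2/(d+2)}$. This is not uniform; it is exactly the law the paper quotes right after the proposition. What stationarity equalizes is the per-cell distortion $p_k\,\mathrm{vol}(V_k)^{2/d}$ (the asymptotic equal-partial-distortion property), not the masses. So unequal masses are what a stationary optimum looks like, and the ``split the heaviest cell, merge the lightest'' move does not reduce distortion there.

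Your fallback $\Delta H\to 0$ is also mis-justified. Under the same approximations, $\log K-H(J^*)\to D(f_X\,\|\,\lambda)>0$. So only the normalized gap vanishes, at rate $O(1/\log K)$, because the denominator grows. That is weaker than $\Delta H=0$.

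The joint-transform route does not help either. With a deterministic encoder and decoder, the end-to-end map $X\mapsto J\mapsto\hat X$ is still a $K$-cell partition of the source space with $K$ reproduction points. Reparametrizing the latent cannot change the cell masses of the distortion-optimal partition. Your Gaussian counterexample therefore applies verbatim to the whole codec: for $K=3$ the Lloyd--Max masses are about $0.27, 0.46, 0.27$.

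\textbf{The missing idea is the paper's, which uses no cell geometry.} It identifies the optimal distortion with the Shannon value, $D^*=D_X(R)$. It then uses the Markov chain $X\to J^*\to\hat X^*$ to get $I(X;\hat X^*)\le I(X;J^*)\le H(J^*)$. If $H(J^*)<R$, pick $R'\in[H(J^*),R)$. The test channel $P_{\hat X^*\mid X}$ is then feasible for $D_X(R')$, so $D_X(R')\le D^*=D_X(R)$. This contradicts the strict decrease of $D_X$. Equivalently, $R=R_X(D^*)\le I(X;\hat X^*)\le H(J^*)\le R$.

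Your remark that the literal claim fails for finite $K$ is correct, and it shows where the paper's argument carries its weight. The identification $D^*=D_X(R)$ treats the best $K$-level quantizer as attaining the rate--distortion function. A finite deterministic quantizer of a continuous source generally does not: a $1$-bit quantizer of a unit Gaussian gives $1-2/\pi\approx 0.36$, versus $D_X(1)=1/4$. So the proposition can only be proved with that idealization stated as a hypothesis, after which the short information-theoretic contradiction suffices. Without it, no refinement of your Step 2 can close the gap, since the conclusion can then fail.
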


Here, $X$ is the original image and $\hat{X}$ is its reconstruction. $d(X,\hat X)$ denotes a nonnegative distortion measure. $\mathbb{E}[\cdot]$ denotes expectation.
A proof by contradiction is provided in \cref{app:proof1}.
\Cref{thm:distortion_entropy} indicates that VQ can be viewed as maximum-entropy probabilistic shaping, which pushes the induced index distribution toward uniformity and leaves little statistical redundancy.

In practice, the distortion optimality condition in \Cref{thm:distortion_entropy} can be restrictive.
A weaker but more general characterization is given in~\cite{vq-p}: for a high rate $C$-dimensional VQ optimized only for quantization error, the induced index probabilities satisfy
\begin{equation}
\label{eq:pdf_vq}
p_J(j)\ \propto\ p_Y(\bm{c}_j)^{\frac{2}{C+2}},
\end{equation}
where $p_Y$ is the probability density of $Y$ and $\bm{c}_j$ denotes the codeword indexed by $j$.
If $Y$ follows a Gaussian distribution and $C=8$, \cref{eq:pdf_vq} already yields $\Delta H \le 5\%$, which is consistent with empirical results on VQ-based codecs~\cite{vq-vae,image-rvq,speech-dac}.
If $\Delta H$ remains above $5\%$, it is preferable to redesign or retrain the quantizer rather than rely on entropy coding.

Motivated by~\cref{thm:distortion_entropy,eq:pdf_vq}, we do not impose an explicit rate constraint during training.
Instead, we regularize the quantizer using a codebook loss $\mathcal{L}_{\mathrm{cb}}$ to control the quantization error.
We train a single model across the operating points indexed by $m\in\mathcal{M}$.
\begin{equation}
  \label{eq:loss}
  \begin{aligned}
    \mathcal{L} = \frac{1}{|\mathcal{M}|}\sum_{m\in \mathcal{M}} \Big(
      \lVert \bm{x} - \hat{\bm{x}}_m\rVert_1
      + \lambda_{\mathrm{per}} \, \mathcal{L}_{\mathrm{per}}(\bm{x}, \hat{\bm{x}}_m) \\
      + \lambda_{\mathrm{adv}} \, \mathcal{L}_{\mathrm{adv}}(\bm{x}, \hat{\bm{x}}_m)
      + \lambda_{\mathrm{cb}} \, \mathcal{L}_{\mathrm{cb}}^{m}
    \Big).
  \end{aligned}
\end{equation}
Here $\hat{\bm{x}}_m$ denotes the reconstruction obtained when each quantizer $Q$ uses $m$ codebooks.
We instantiate $\mathcal{L}_{\mathrm{per}}$ as LPIPS computed with a VGG network~\cite{vgg}, and set $\mathcal{L}_{\mathrm{adv}}$ to the adaptive PatchGAN adversarial loss~\cite{vq-gan}.
Following~\cite{vq-vae}, the codebook loss $\mathcal{L}_{\mathrm{cb}}^{m}$ includes a commitment term and a codebook update term, which constrains $\Delta H$ to remain small and thereby removes statistical redundancy.

\subsection{Representation-domain Latent Decorrelation}

In this subsection, we analyze correlation redundancy in EF-LIC.
Information theory \cite{shannon1948mathematical} provides the R--D function as a principle for analyzing compression performance, which is defined as:
\begin{equation}
  \label{eq:rd_function}
D_X(R)\ \triangleq\ \inf_{P_{\hat{X}|X}:\ I(X;\hat{X})\le R}\ \mathbb{E}\!\left[d(X,\hat{X})\right].
\end{equation}
Here $D_X(R)$ denotes the minimum achievable expected distortion between $X$ and $\hat{X}$ under an average bitrate constraint $R$.
The infimum is taken over all conditional distributions $P_{\hat{X}|X}$ that satisfy $I(X;\hat{X})\le R$, where $I(X;\hat{X})$ is the mutual information between $X$ and $\hat{X}$.
This constraint limits how much information about $X$ can be preserved in $\hat{X}$, serving as a lower-bound for bitrate in theory.
Accordingly, a more effective LIC model attains a lower distortion at a given bitrate by reducing redundant information.

We first establish a baseline that excludes representation-domain decorrelation and directly quantizes each latent group independently, so as to isolate and evaluate the contribution of decorrelation in EF-LIC.

\begin{definition}[Independent Quantization (IQ)]\label{def:scheme1}
As shown in \cref{fig:framework-a}, let $\mathcal{Y}=\{Y_i\}_{i=1}^N$ denote the random variables for the latent groups $\{\bm{y}_i\}_{i=1}^N$.
The baseline VQ quantizes each $Y_i$ independently with a quantizer $Q_i^{\mathrm{IQ}}(\cdot)$ under a fixed rate $R=n\log K$.
Its R--D function $D_X^{\mathrm{IQ}}(\cdot)$ is
\begin{equation}\label{eq:iq-opt}
\begin{aligned}
D_X^{\mathrm{IQ}}(R)
&\triangleq
\inf_{\{Q_i^{\mathrm{IQ}}\}}
\ \mathbb{E}\!\left[d\!\left(X,\hat X\right)\right]\\
\text{s.t.}\quad
 \hat Y_i &= Q_i^{\mathrm{IQ}}(Y_i), \quad i=1,\dots,N, \\
 R &= n\log K. 
\end{aligned}
\end{equation}
\end{definition}\vspace{-0.3\baselineskip}

Let $D_X^{\mathrm{RD}}(R)$ denote the R--D function of EF-LIC. We compare it against $D_X^{\mathrm{IQ}}(R)$ in the following proposition.
\begin{proposition}[R--D Lower bound for EF-LIC]
\label{thm:ar_dominance_context}
Assume $e_i^{\mathrm{RD}}, d_i^{\mathrm{RD}}, Q_i^{\mathrm{RD}}$ are sufficiently expressive,
for any grouped latent $Y=(Y_1,\dots,Y_N)$, there exist
$e_i^{\mathrm{RD}}, d_i^{\mathrm{RD}}, Q_i^{\mathrm{RD}}$, $i \in \{1,\dots,N\}$, such that
\begin{equation}
  \forall R\ge 0,\quad D_X^{\mathrm{RD}}(R) \le D_X^{\mathrm{IQ}}(R).
\end{equation}
If there exists $i$ such that $I(\hat Y_{i};\hat Y_{<i})>0$, then
\begin{equation}
  \exists R\ge 0,\quad D_X^{\mathrm{RD}}(R) < D_X^{\mathrm{IQ}}(R).
\end{equation}
\end{proposition}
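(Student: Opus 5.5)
The plan has two parts. For the first inequality I will show that independent quantization is a special case of EF-LIC. For the strict inequality I will show that EF-LIC can exploit the dependence $I(\hat Y_i;\hat Y_{<i})>0$ at the same fixed budget $n\log K$.

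\textbf{Part 1: IQ is embedded in EF-LIC.} Take any feasible IQ scheme $\{Q_i^{\mathrm{IQ}}\}$ at rate $R=n\log K$. Define the RD components by ignoring the context:
\[
e_i^{\mathrm{RD}}(y;\psi)=y,\qquad d_i^{\mathrm{RD}}(\hat y';\psi)=\hat y',\qquad Q_i^{\mathrm{RD}}=Q_i^{\mathrm{IQ}}.
\]
In the affine form this is $\bm\mu_i=0$, $\bm\sigma_i=1$. Then $\hat Y_i$ is identical for every realization, the same codebook sizes give the same fixed-length rate, and hence the expected distortion is the same. Taking the infimum over IQ schemes gives $D_X^{\mathrm{RD}}(R)\le D_X^{\mathrm{IQ}}(R)$ for all $R$.

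\textbf{Part 2: strict improvement.} Suppose $I(\hat Y_i;\hat Y_{<i})>0$ for some $i$, and fix an IQ scheme that is optimal, or $\epsilon$-optimal, at some rate $R$.

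1. By \cref{thm:distortion_entropy}, at the optimum each index stream uses its full budget, so $H(J_i)=n_i\log K_i$.
2. The dependence means $H(\hat Y_i\mid\hat Y_{<i})<H(\hat Y_i)$. Since $\hat Y_{<i}$ is available to the decoder through $\psi_i$, the $i$-th group carries on average $I(\hat Y_i;\hat Y_{<i})>0$ bits that are redundant.
3. With sufficiently expressive $e_i^{\mathrm{RD}}$, $d_i^{\mathrm{RD}}$ and a context-dependent codebook, build a conditional quantizer that enumerates only reconstruction points that are plausible given $\hat Y_{<i}$. In effect this is a relabeling of the codebook conditioned on the context. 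It reproduces the IQ reconstruction while using fewer effective index states.
4. Use the freed states to refine the quantization cells, adding codewords inside the high-probability conditional cells. This strictly reduces distortion, provided the distortion is not already zero and refinement lowers it. This holds under a mild nondegeneracy assumption, e.g.\ $D_X^{\mathrm{IQ}}$ is strictly decreasing in $R$.
5. Equivalently, EF-LIC at rate $R$ matches IQ at some rate $R'>R$. Strict monotonicity then gives $D_X^{\mathrm{RD}}(R)\le D_X^{\mathrm{IQ}}(R')<D_X^{\mathrm{IQ}}(R)$.

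\textbf{Main obstacle.} The hard step is turning a positive mutual information, which is an \emph{average} saving, into a saving under \emph{fixed-length} indices, where the codebook size per group is fixed.

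My first attempt is a blocking argument. Group many samples or positions together and apply a typicality or AEP-style argument. The conditionally typical set of $\hat Y_i$ given $\hat Y_{<i}$ has size about $2^{n_iH(\hat Y_i\mid\hat Y_{<i})}$, which is fewer than $2^{n_iH(\hat Y_i)}$, so a fixed-length index suffices up to a vanishing error probability. I would then absorb that error into the distortion bound using boundedness of $d$.

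If the blocking becomes too technical, the fallback is an explicit construction. Take the rate at which IQ can use $K$ codewords but the conditional support of $\hat Y_i$ given some context value has strictly fewer than $K$ points. Place one extra codeword there, which strictly lowers the expected distortion on a set of positive probability.
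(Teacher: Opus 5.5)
Your Part~1 is the paper's argument: take $e_i^{\mathrm{RD}},d_i^{\mathrm{RD}}$ to be identities and $Q_i^{\mathrm{RD}}=Q_i^{\mathrm{IQ}}$. For the strict part you take a genuinely different route.

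The paper works with \emph{lossy} coding of $Y_i$:
\begin{itemize}
\item it uses data processing to pass from $I(\hat Y_i;\hat Y_{<i})>0$ to $I(Y_i;\hat Y_{<i})>0$;
\item it compares conditional and unconditional R--D functions, $R_{i\mid S}(D)\le R_i(D)$, under an induced distortion $\bar d_i$;
\item it imports a strictness result (Linder et~al.) to get a gap $\delta>0$ at some $D^\star$, and realizes that gap via Shannon's operational fixed-length theorem.
\end{itemize}
Your blocking argument instead re-encodes the IQ reconstruction $\hat Y_i$ itself \emph{losslessly} given $\hat Y_{<i}$, using a fixed-length index into the conditionally typical set. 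Since $\hat Y_i$ is a function of $J_i$, $H(\hat Y_i\mid\hat Y_{<i})=H(\hat Y_i)-I(\hat Y_i;\hat Y_{<i})\le n_i\log K_i-I(\hat Y_i;\hat Y_{<i})$. So the rate saving is explicitly at least the mutual information in the hypothesis.

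This route has three advantages. It is more elementary, needing only finite-alphabet conditional typicality. It uses the hypothesis directly. It also avoids the paper's reliance on an external strictness theorem, whose $D^\star$ must be chosen consistently with a distortion $\bar d_i$ that is itself defined from the IQ scheme at $D^\star$.

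The price is twofold. You need a bounded-distortion assumption to absorb the typicality error. You also need your strict-monotonicity assumption on $D_X^{\mathrm{IQ}}$. The latter can be removed with the paper's device: take the IQ scheme attaining $R_X^{\mathrm{IQ}}(D^\star)$. Then $D_X^{\mathrm{IQ}}(R)>D^\star$ for every smaller $R$, directly from the definition of the generalized inverse. Like the paper, you also need the dependence to hold for an optimal IQ scheme, or to stay bounded away from zero along $\epsilon$-optimal ones. Otherwise the saving can vanish as $\epsilon\to 0$.

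Two pieces of your write-up should be dropped rather than kept as supporting steps.

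Step~1 is not just unnecessary but self-defeating. \cref{thm:distortion_entropy} is a statement about the joint index $J$. If $H(J)=n\log K$, then $J$ is uniform on $\{1,\dots,K\}^n$, so the groups are independent and $I(\hat Y_i;\hat Y_{<i})=0$. That contradicts the very hypothesis you are using. All you need is the trivial bound $H(\hat Y_i)\le n_i\log K_i$.

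The single-shot context-dependent relabeling of steps~3--4, and your fallback, do not work in general. The condition $I(\hat Y_i;\hat Y_{<i})>0$ says the conditional laws differ, not that the conditional supports shrink. A one-shot relabeling may therefore free no index states at all.

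The blocking argument must carry the proof, and the blocks must be formed from i.i.d.\ source realizations. Since $\hat Y_i$ already denotes the whole group, the typical-set exponent is $L\,H(\hat Y_i\mid\hat Y_{<i})$ for $L$ realizations, not $n_i H(\cdot)$. Latent positions within one image are not i.i.d. With that correction, your proof sits at the same asymptotic level as the paper's appeal to the operational rate--distortion theorem. Step~4 is then superseded by step~5.
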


A proof is given in \cref{app:proof2}.
\Cref{thm:ar_dominance_context} establishes that adding representation-domain latent decorrelation cannot worsen the R--D trade-off.
Since Independent Quantization in \cref{def:scheme1} underpins several strong VQ-based codecs~\cite{vq-mao2024extreme,vq-vae,speech-soundstream,speech-dac}, EF-LIC is guaranteed to match or improve upon this baseline in terms of compression performance.

Next, we establish an upper bound for EF-LIC by comparing it with typical entropy-coded LIC, in which context modeling and entropy coding can eliminate both statistical and correlation redundancy in principle, to quantify how efficiently EF-LIC narrows the gap to entropy-coded LIC.

\begin{definition}[Probability-Domain context modeling (CM)]\label{def:scheme2}
As shown in \cref{fig:framework-b}, let $f_i^{\mathrm{CM}}$ denote the context model and $\theta_i = (\bm{\mu}_i,\bm{\sigma}_i)$ is the distribution parameter for $Y_i$.
Following~\cite{lic-joint}, the R--D function of typical LIC with entropy coding is defined as
\begin{equation}\label{eq:em-opt}
\begin{aligned}
D_X^{\mathrm{CM}}(R)
&\triangleq
\inf_{\{Q_i^{\mathrm{CM}}, f_i^{\mathrm{CM}}\}}
\ \mathbb{E}\!\left[d\!\left(X,\hat X\right)\right] \\
\text{s.t.}\quad
\hat Y_i & = Q_i^{\mathrm{CM}}(Y_i), \quad i=1,\dots,N, \\
\theta_i & = f_i^{\mathrm{CM}}(\hat Y_{<i}), \quad i=1,\dots,N, \\
R&  = \sum_{i=1}^N
  \mathbb{E}\!\left[
    -\log P_{\hat Y_i\mid \hat Y_{<i}}(\hat Y_i \mid \hat Y_{<i};\theta_i)
  \right].
\end{aligned}
\end{equation}
\end{definition}
The rate $R$ is achieved through ideal entropy coding. $Q_i^{\mathrm{CM}}(\cdot)$ is usually the round operation. We compare $D_X^{\mathrm{RD}}(R)$ against $D_X^{\mathrm{CM}}(R)$ in the following theorem.

\begin{theorem}[R--D upper bound for EF-LIC]
\label{thm:em}
Assume $e_i^{\mathrm{RD}}, d_i^{\mathrm{RD}}, Q_i^{\mathrm{RD}}$ are sufficiently expressive (i.e., $K$ is sufficiently large).
Fix a target rate $R>0$ and an arbitrary parameter $\varepsilon\in(0,1)$.
Then there exists an implementation with fixed-length rate budget $R' \triangleq \frac{R}{1-\varepsilon}$,
whose induced index distribution satisfies the normalized entropy gap bound
\begin{equation}
\Delta \bar{H} \triangleq
\frac{\sum_{i=1}^N \left( n_i \log K_i - H\left(J_i^{\mathrm{RD}}\mid \hat{Y}_{<i}^{\mathrm{RD}}\right)\right)}{\sum_{i=1}^N n_i \log K_i}
\le \varepsilon,
\label{eq:entropy_gap}
\end{equation}
and whose R--D performance obeys,
\begin{equation}
D_X^{\mathrm{RD}}\!\left(\frac{R}{1-\varepsilon}\right)\le D_X^{\mathrm{CM}}(R).
\end{equation}
According to \cref{thm:distortion_entropy}, the overhead factor $1/(1-\varepsilon)$ will be closed to $1$ with sufficiently large $K$ under sufficient training.
\end{theorem}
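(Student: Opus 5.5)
The plan is a simulation argument. Fix an arbitrary CM implementation $\{Q_i^{\mathrm{CM}}, f_i^{\mathrm{CM}}\}$ that attains distortion within $\delta$ of $D_X^{\mathrm{CM}}(R)$ at rate $R=\sum_i \mathbb{E}[-\log P_{\hat Y_i\mid\hat Y_{<i}}]$. From it we build an EF-LIC implementation that produces the \emph{same} reconstruction $\hat Y=(\hat Y_1,\dots,\hat Y_N)$ almost surely. It will use fixed-length indices of total budget $R'=R/(1-\varepsilon)$. Taking $\delta\to0$ then gives $D_X^{\mathrm{RD}}(R/(1-\varepsilon))\le D_X^{\mathrm{CM}}(R)$, because equal reconstructions give equal expected distortion under the same $g_s$.

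\textbf{Construction.} Work group by group, conditioning on the decoded past $\hat Y_{<i}$, which both encoder and decoder share through $\bm\psi_i$.

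\begin{enumerate}
\item Choose $e_i^{\mathrm{RD}}, d_i^{\mathrm{RD}}$ (sufficiently expressive, so not necessarily affine) and the extractor $f_i^{\mathrm{RD}}$ to depend on $\hat Y_{<i}$ exactly as $f_i^{\mathrm{CM}}$ does.
\item Let $e_i^{\mathrm{RD}}$ first apply $Q_i^{\mathrm{CM}}$ to $Y_i$, producing a discrete symbol $s=\hat Y_i$ with known conditional pmf $P(\cdot\mid\hat Y_{<i})$.
\item Map $s$ to a block of VQ indices by a conditional variable-to-fixed (Tunstall-type) or typical-set code, so that each index block $J_i^{\mathrm{RD}}$ is nearly uniform given $\hat Y_{<i}$. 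Concretely, code blocks of many i.i.d.\ copies, or spatial positions within the group, as in the usual AEP argument.
\item Let $d_i^{\mathrm{RD}}$ invert this map.
\end{enumerate}

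By the conditional AEP / Shannon source-coding theorem, for $K_i$ or the block length large enough, the fixed-length cost $n_i\log K_i$ can be made at most $H(\hat Y_i\mid\hat Y_{<i})/(1-\varepsilon)$. Decoding is lossless except on an atypical set of probability $\eta$. Summing over $i$, the ideal CM rate is $R\ge\sum_i H(\hat Y_i\mid \hat Y_{<i})$, since cross-entropy dominates entropy. Hence the total budget is at most $R/(1-\varepsilon)=R'$.

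\textbf{Entropy gap.} Since the map from $\hat Y_i$ to $J_i^{\mathrm{RD}}$ is injective on the typical set given the context, $H(J_i^{\mathrm{RD}}\mid\hat Y_{<i})\approx H(\hat Y_i\mid\hat Y_{<i})\ge(1-\varepsilon)n_i\log K_i$. Summing and normalizing gives $\Delta\bar H\le\varepsilon$, which is \eqref{eq:entropy_gap}. If the constructed code has slack, we may instead invoke \cref{thm:distortion_entropy}: the distortion-optimal quantizer under the budget saturates entropy, so optimizing over RD implementations can only shrink the gap.

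\textbf{Main obstacle.} The hard part is handling the atypical/error event and the nonblock, single-shot setting rigorously. EF-LIC codes one realization per group, not i.i.d.\ blocks, so the AEP must come from the many spatial positions ($n_i$ large) together with "$K$ sufficiently large". Errors on the atypical set change $\hat X$, so we need $d$ bounded, or at least uniformly integrable, to show the extra distortion is $O(\eta)\to0$. I would first try absorbing $\eta$ into the $\delta$-slack by taking $K_i$ large enough. Failing that, I would handle the atypical set with a fallback index pattern that maps to the CM reconstruction via a slightly larger codebook, at rate overhead vanishing inside the $\varepsilon$ margin.
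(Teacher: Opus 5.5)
There is a genuine gap, and it is the one you flag as the ``main obstacle''. It is not a technicality but the entire difficulty. Your step ``$n_i\log K_i\le H(\hat Y_i\mid\hat Y_{<i})/(1-\varepsilon)$, lossless except with probability $\eta$'' needs a conditional AEP. But $D_X^{\mathrm{RD}}$ and $D_X^{\mathrm{CM}}$ are single-shot quantities for one fixed source, so none of the ingredients is available:
\begin{itemize}
\item there is no blocklength to send to infinity;
\item the $n_i$ positions of a group are neither i.i.d.\ nor ergodic given $\hat Y_{<i}$;
\item enlarging $K_i$ cannot help, because $\sum_i n_i\log K_i$ is pinned to $R'$;
\item a Tunstall code does not help either, since it emits a variable number of fixed-length words per image.
\end{itemize}
So $\eta$ is not a parameter you can drive to zero. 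A lossless fallback for atypical symbols costs $\log|\mathrm{supp}(\hat Y_i)|$ bits, which can exceed $R'$ by any amount; it is infinite when an unbounded latent is rounded. Some extra hypothesis is unavoidable. A fixed-length budget of $R'$ bits addresses at most $2^{R'}$ reconstructions. For a scalar Gaussian source with $N=1$, at high rate the optimal fixed-rate quantizer needs roughly $0.47$ bit more than entropy-coded uniform quantization to reach the same MSE. That exceeds $R'-R=R\varepsilon/(1-\varepsilon)$ once $\varepsilon$ is small.

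The paper's proof never compresses the CM symbol, and it pays for this with exactly such a hypothesis. Write $S_i=\hat Y_i^{\mathrm{CM}}$ and $Z_i=\hat Y_{<i}^{\mathrm{CM}}$. The paper assumes finite alphabets with $|\mathrm{supp}(S_i)|\,M_i\le K_i^{n_i}$. The pair $(S_i,B_i)$ is then injected deterministically and losslessly into $\{1,\dots,K_i\}^{n_i}$, and $\hat X^{\mathrm{RD}}=\hat X^{\mathrm{CM}}$ exactly, with no error event.

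The entropy-gap bound comes from padding, not from near-uniform coding of $S_i$:
\begin{itemize}
\item The conditional probability integral transform (\cref{lem:cpit}), applied to the atomless residual $Y_i\mid(Z_i,S_i)$, yields an index $B_i$ that is conditionally uniform on $\{1,\dots,M_i\}$.
\item Hence $H(J_i^{\mathrm{RD}}\mid\hat Y^{\mathrm{RD}}_{<i})=H(S_i\mid Z_i)+\log M_i$.
\item Let $R_0=\sum_i H(S_i\mid Z_i)\le R$ and choose $\sum_i\log M_i\ge R-R_0$. This lifts the total conditional entropy to at least $R$ without changing the reconstruction, whence $\Delta\bar H\le(R'-R)/R'=\varepsilon$.
\end{itemize}
This dither-padding device is the idea your proposal is missing.

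Be aware, though, that the compatibility assumption carries real weight. Summed over $i$, it forces (up to rounding) $\sum_i\log|\mathrm{supp}(S_i)|\le R_0+R\varepsilon/(1-\varepsilon)$. A correct version of your argument should therefore state an alphabet condition of this kind explicitly, rather than try to manufacture it from an AEP.
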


A proof is given in \cref{app:proof3}.
\Cref{thm:em} shows that EF-LIC removes correlation redundancy as effectively as typical LIC with context modeling and entropy coding.
Together with our analysis of statistical redundancy, this establishes that EF-LIC removes both types of redundancy while preserving compression performance.
With the architecture in \cref{sec:architecture}, EF-LIC further enables high GPU parallelism and low latency, mitigating the performance--efficiency bottleneck of entropy coding in conventional LIC.

\begin{figure*}[t]
  \centering
  \includegraphics[width=0.995\linewidth]{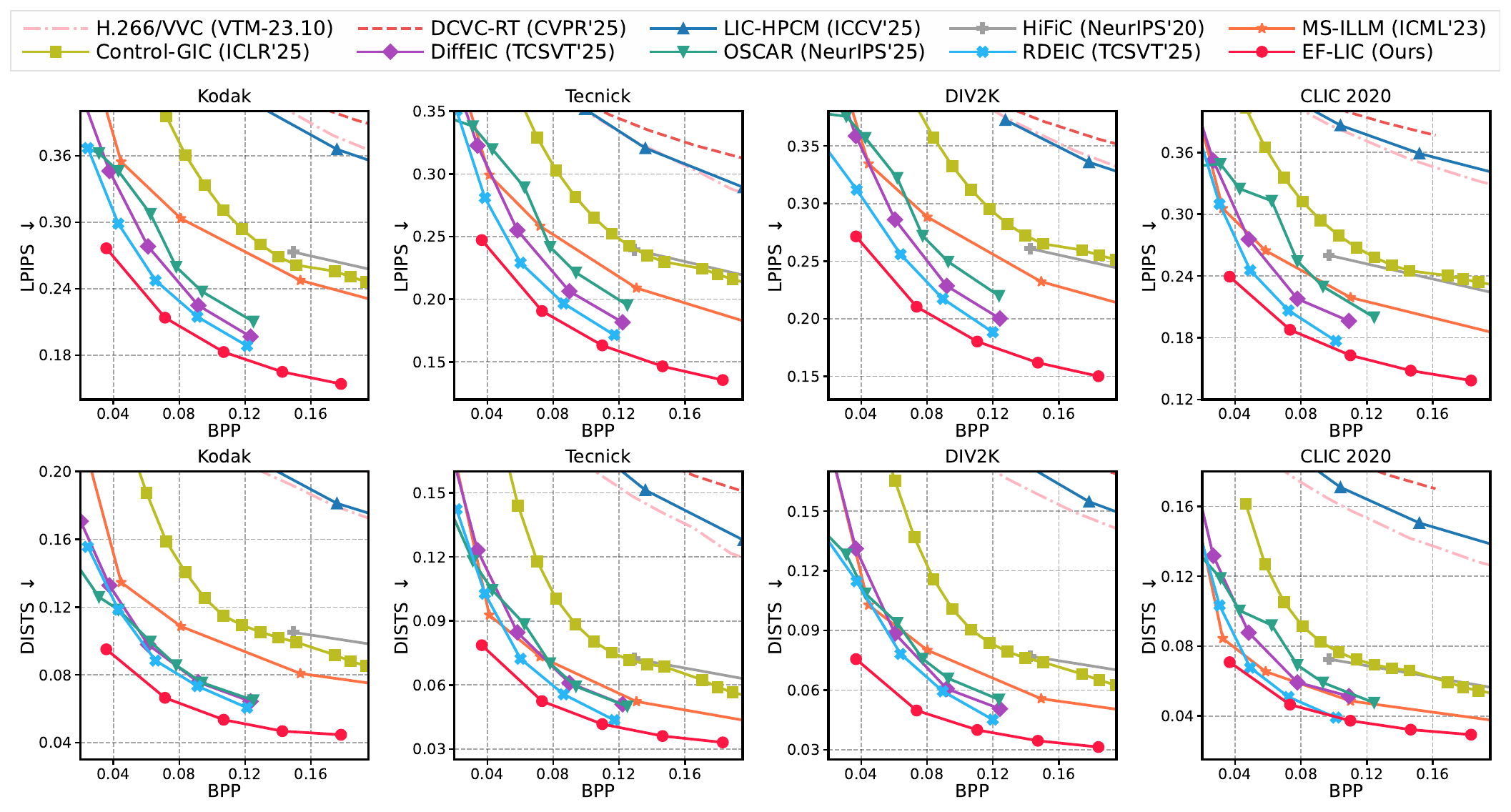}
  \caption{R--D performance on the Kodak, Tecnick, DIV2K, and CLIC2020 datasets, evaluated with LPIPS and DISTS vs. BPP. Curves closer to the origin indicate better compression performance.}
  \label{fig:RD-curve}
\end{figure*}

\begin{figure*}[t]
  \centering
  \includegraphics[width=0.995\linewidth]{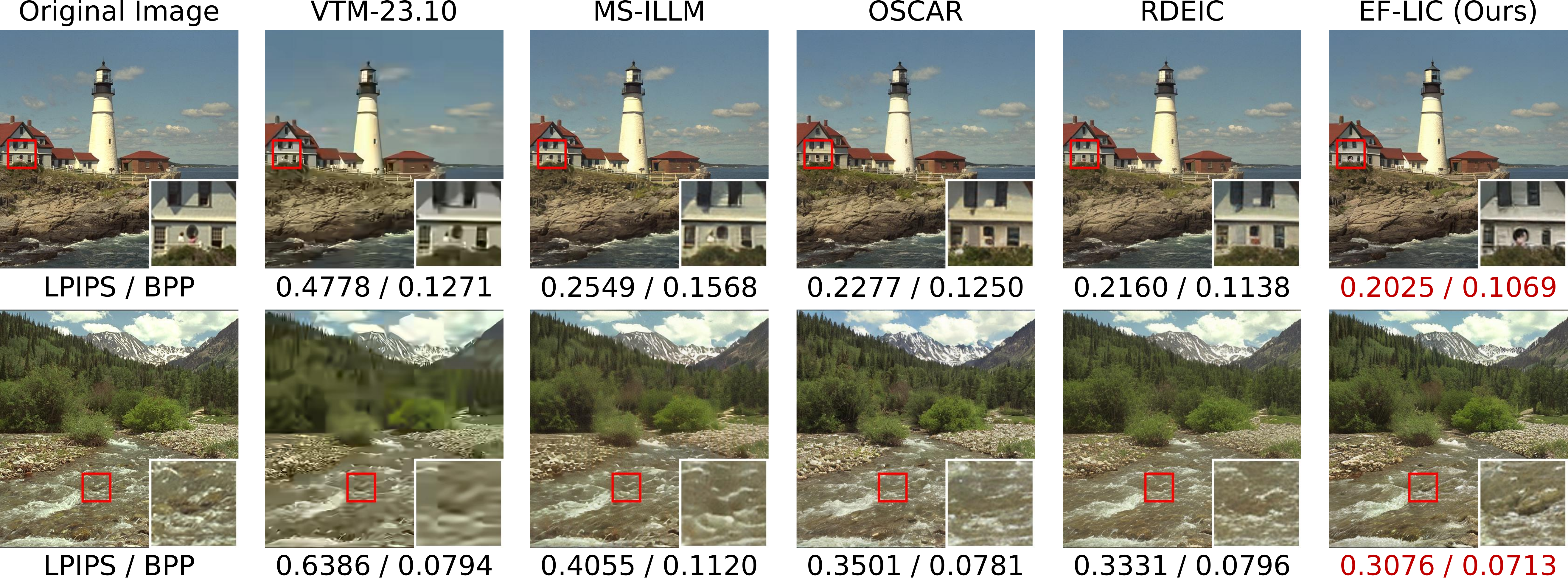}
  \caption{Visual comparison on Kodak. Numbers are LPIPS/BPP. Lower LPIPS is better.}
  \label{fig:vis}
\end{figure*}

\begin{table*}[t]
  \caption{Computational complexity measured on Kodak and BD-rate on four datasets. More negative BD-rate means lower bitrate at the same distortion. Best results are in \textbf{bold} and second best are \underline{underlined}. Dashes (--) denote unavailable results. “Enc./Dec.” reports per-image encoding/decoding time.}
  \label{tab:bd_table}
  \centering
  \begin{tabular}{@{}lrrrrrrrr@{}}
    \toprule
    \multirow{2}{*}{Method} &
    \multirow{2}{*}{Enc.(ms)} &
    \multirow{2}{*}{Dec.(ms)} &
    \multirow{2}{*}{FLOPs(G)} &
    \multirow{2}{*}{Params(M)} &
    \multicolumn{4}{c}{BD-rate (LPIPS)} \\
    \cmidrule(lr){6-9}
    & & & & & Kodak & Tecnick & DIV2K & CLIC 2020 \\
    \midrule
    H.266/VVC& $>$9999       & 150.30       & --       & --  & 313.84\%  & 295.19\%  & 285.10\%  & 498.64\%  \\
    LIC-HPCM        & 62.37   & 82.88   & 732.47  & 68.50   & 274.50\%  & 305.42\%  & 267.18\%  & 745.04\%  \\
    DCVC-RT & \underline{14.09}   & 17.08   & 382.98 & 45.65    & 393.72\%  & 329.44\%  & 349.39\%  & 584.41\%  \\
    \midrule
    HiFiC           & 526.51  & 1408.60 & 599.51  & 181.60  & 45.82\% & 68.66\% & 46.36\% & 86.45\% \\
    Control-GIC     & 103.56  & 436.26  & 5816.37 & 130.36  & 33.36\% & 68.83\% & 73.77\% & 136.25\% \\
    MS-ILLM         & 165.38  & 147.79  & 599.52  & 181.40  & 0.00\%    & 0.00\%    & 0.00\%    & 0.00\%    \\
    DiffEIC        & 210.18 & 4661.74 & 57339.93  & 1379.50      & -37.71\% & -9.96\% & -15.76\% & 4.34\% \\
    OSCAR           & 53.04   & 167.56  & 6485.61 & 1009.30 & -37.31\% & -5.76\% & -14.51\% & 18.76\% \\
    RDEIC          & 157.25  & 426.68  & 7767.46 & 1380.27  & -52.08\% & -31.36\% & -35.70\% & -20.44\% \\
    \midrule
    EF-LIC-s  & \textbf{9.94}    & \textbf{6.26}    & \textbf{179.19}  & \textbf{11.51}   & \underline{-55.38\%} & \underline{-38.23\%} & \underline{-47.36\%} & \underline{-42.10\%} \\
    EF-LIC   & 17.62    & \underline{13.72}   & \underline{279.61}  & \underline{35.74}   & \textbf{-67.86\%} & \textbf{-55.46\%} & \textbf{-62.33\%} & \textbf{-63.22\%} \\
    \bottomrule
  \end{tabular}
\end{table*}

\section{Experiments}
\subsection{Experimental Setup}
We follow the common practice~\cite{lic-balle2018variational,dcvc-rt} and set $f_y=16$ and $f_z=64$. 
Since $N=4$, we set $K_1 = 1024, K_2 = 512, K_3 = 256, K_4 = 128, K_{\bm{z}} = 1024$. 
This is an empirical setting, for which we conduct an ablation study in~\cref{tab:Ablation_codewords}. 
We also build a small model EF-LIC-s, for which we discard the hyperprior and set $K_1 = 1024, K_2 = 256, K_3 = 128, K_4 = 64$ to speed up. 
$g_a$ and $g_s$ are also simplified in it. 
We set $\mathcal{M} = \{1, 2, 3, 4, 5\}$ to cover a feasible rate range, which supports the comparison with other LIC optimized for visual quality.

We perform training on the ImageNet dataset~\cite{imagenet}. For data preprocessing, we randomly sample 1\% of the instances per epoch and apply augmentations including $256 \times 256$ random cropping and horizontal flipping. The model is optimized using Adam~\cite{adam} with $\beta_1 = 0.5$ and $\beta_2 = 0.9$. We employ a batch size of 16 for a total of 2M iterations. The learning rate is initialized at $10^{-4}$ and decayed to $10^{-5}$ after 1.5M steps. All training is conducted on one NVIDIA A100 GPU, with a peak memory footprint of approximately 10.5GB.

Evaluations are conducted on four standard test sets: (i) Kodak~\cite{Kodak} (24 images, with resolution of $768 \times 512$), (ii) Tecnick~\cite{tecnick} (100 images, with resolution of $1200 \times 1200$), (iii) DIV2K~\cite{div2k} (100 images, 2K resolution), and (iv) CLIC 2020 Professional~\cite{clic2020} (250 images, variable resolutions up to 2K).
Consistent with~\cite{vq-gic}, we report LPIPS~\cite{lpips} and DISTS~\cite{dists} as principal metrics, since they better reflect visual quality than pixel-wise metrics such as PSNR~\cite{perception}.
Therefore, we primarily compare against LIC optimized for visual quality for fairness.
We provide more results on other metrics in~\cref{app:others}.

\subsection{Rate-Distortion Performance}
The comparison includes: (i) traditional codecs: VTM-23.10~\cite{vtm}. (ii) LICs for pixel-level reconstruction: LIC-HPCM~\cite{hpcm} and DCVC-RT~\cite{dcvc-rt}. (iii) Generative LICs, including GAN-based methods: HiFiC~\cite{hific} and MS-ILLM~\cite{ms-illm}; VQ-based method: Control-GIC~\cite{cgic}; and diffusion-based methods: DiffEIC~\cite{diffeic}, OSCAR~\cite{oscar} and RDEIC~\cite{rdeic}. For VTM-23.10~\cite{vtm} and DCVC-RT~\cite{dcvc-rt}, we use their intra-frame coding schemes for image compression. To ensure a rigorous comparison, all evaluations utilize official pre-trained checkpoints in FP32 precision with a batch size of 1. Experiments are conducted on a unified hardware platform with one NVIDIA A100 GPU and an AMD EPYC 7763 CPU. Notably, under the official inference setting, evaluating OSCAR on DIV2K and CLIC 2020 requires more than 80GB of GPU memory per image.
We therefore offload selected model components to CPU memory during inference to avoid out-of-memory failures.
The results are summarized in \cref{tab:bd_table} and \cref{fig:RD-curve}, and additional evaluations with BD-rate on DISTS are detailed in~\cref{app:dists}.

Notably, EF-LIC improves BD-rate exceeding 55\% in LPIPS compared to MS-ILLM consistently across all benchmarks.
It also outperforms diffusion-based methods such as OSCAR and RDEIC, while requiring significantly fewer parameters.
Visual comparisons in \cref{fig:vis} illustrate that EF-LIC uniquely preserves the circular archway in the first image, and the authentic wave texture in the second.

\subsection{Complexity Analysis}
As shown in \cref{tab:bd_table}, we report coding time (ms), floating-point operations (GFLOPs), and model size in parameters (M), all measured on the standardized hardware described above.
Results at higher resolutions (1080p, 2K, and 4K) are reported in \cref{app:runtime}.

EF-LIC provides over $9\times$ faster encoding and $10\times$ faster decoding than MS-ILLM.
It outperforms the one-step diffusion method OSCAR while achieving $10\times$ faster decoding.
The results indicate that EF-LIC and EF-LIC-s improve compression performance while delivering an order-of-magnitude speedup over prior methods.

\subsection{Ablation Studies}\label{sec:ablation}
We next conduct ablation studies to isolate the contribution of each component.
For efficiency, all ablation models are trained on ImageNet for 1M iterations with a batch size of 16, while keeping all other training settings the same as in the main experiments.
We evaluate all variants on Kodak using LPIPS for a unified comparison.

\paragraph{Comparison with Different Variants.}

To set up, we follow the rANS~\cite{rans} in CompressAI~\cite{compressai} to implement entropy coding. To isolate module-specific impacts given varying multi-rate implementations, all models are trained for several single rates using the same loss. More detailed configurations are in~\cref{app:exp_detail}.

\begin{table}[t]
  \caption{Ablation study of EF-LIC and its variants. ``VQ'' is the baseline without inter-latent decorrelation. $\Delta$FLOPs is the FLOPs change compared to the VQ baseline. ``EC'' denotes entropy coding. ``UQ+EC'' corresponds to typical LIC with entropy coding.}
  \label{tab:Ablation_module}
  \centering
  \begin{tabular}{@{}lrrrr@{}}
    \toprule
    Modules & BD-rate & $\Delta$FLOPs & Enc.(ms) & Dec.(ms)\\
    \midrule
    VQ                      & 0.00\%       &  0.00\%   &  5.51 & 7.06    \\
    \midrule
    VQ+EC                & -14.73\%     & +4.30\%  &  362.07 & 300.83  \\
    UQ+EC                & -20.73\%     & +7.53\%  & 63.12 & 71.72   \\ 
    EF-LIC                  & \textbf{-22.20\%}   & +7.54\%  & 17.62 & 13.72   \\
    \midrule
    EF-LIC-s                & -10.76\%     &-56.30\%   & 9.94 & \textbf{6.26}    \\
    \bottomrule
  \end{tabular}
\end{table}

\begin{table}[t]
  \caption{Ablation study of per-module running time (ms). ``Q'' is quantization. ``Others'' include all remaining modules such as $g_a$ and $g_s$. ``Autoregressive'' is the context-conditional transform in EF-LIC or the context model in typical LIC with entropy coding.}
  \label{tab:Ablation_time}
  \centering
  \begingroup
  \setlength{\tabcolsep}{3.7pt} 
  \begin{tabular}{@{}lrrrrr@{}}
    \toprule
      & Q & EC & Autoregressive & Others & Total \\
    \midrule
    VQ       & 1.40  & --    & --    & 9.48 & 12.52   \\
    \midrule
    UQ+EC  & 0.01 & 108.60&  4.42 & 10.73 & 124.74 \\
    VQ+EC  & 1.52 & 507.89&  3.29 & 11.09 & 525.09 \\
    EF-LIC    & 9.16 & --    &  4.38 & 10.53 & 24.78  \\
    \midrule
    EF-LIC-s  & 6.93 & --    &  2.51 &  3.68 & 13.84  \\
    \bottomrule
  \end{tabular}
  \endgroup
\end{table}

We first compare EF-LIC with the VQ baseline without decorrelation, reported as ``VQ'' in \cref{tab:Ablation_module,tab:Ablation_time,fig:bd2}.
The results suggest representation-domain decorrelation improves BD-rate by $22.20\%$, suggesting that it effectively removes correlation redundancy, supporting \cref{thm:ar_dominance_context}.
The runtime breakdown in \cref{tab:Ablation_time} shows the autoregressive module contributes only a small fraction of the combined runtime, indicating its efficiency.
Because the autoregressive transform introduces additional computation,
we evaluate EF-LIC-s, a lightweight variant configured to match the decoding latency of the VQ baseline to form a fair comparison.
Under this setting, EF-LIC-s still reduces BD-rate by $10.76\%$, indicating that the gain comes from decorrelation rather than increased computation.

We next compare EF-LIC with its entropy-coded variant, reported as ``UQ+EC'' in \cref{tab:Ablation_module,tab:Ablation_time,fig:bd2}.
EF-LIC achieves better compression performance while decoding about $5\times$ faster than ``UQ+EC'' because of long entropy coding time in ``UQ+EC''.
\cref{thm:em} predicts that ``UQ+EC'' can outperform EF-LIC by at most the remaining entropy gap.
Though EF-LIC exhibits a small average gap of $\Delta \bar{H}=3.42\%$ (detailed results are in \cref{app:entropy}), the use of rANS in ``UQ+EC'' introduces extra redundancy and worsens BD-rate by $3.28\%$ compared to ideal entropy coding, which is consistent with the experimental results to \cref{thm:em}.

Finally, we apply a context model and entropy coding directly to the VQ indices \cite{vq-pq-mim}, and report the results as ``VQ+EC'' in \cref{tab:Ablation_module,tab:Ablation_time,fig:bd2}.
This approach is impractical in our setting because entropy coding must construct input-dependent cumulative distribution functions, which leads to very long coding time.
Moreover, the hard VQ operation blocks gradients to the context model, making end-to-end optimization suboptimal.

\paragraph{Ablation Study on Codeword Numbers.}

The codebook sizes $[\log K_1, \log K_2, \log K_3, \log K_4, \log K_{\bm{z}}]$ for the quantizers $[Q_1, Q_2, Q_3, Q_4, Q_{\bm{z}}]$ are manually specified.
We conduct an ablation study on these configurations. 
As reported in \cref{tab:Ablation_codewords}, we find that using the hyperprior can already provide primary performance gain with side information.

After introducing the context-conditional autoregressive transform, allocating fewer codewords to the later quantizers tends to improve performance.
This is because the later latents contain less information. Smaller codebooks better match this reduced support.
Overly small codebooks can become a bottleneck and degrade performance.

\begin{table}[t]
  \caption{Ablation study of the codebook sizes for quantizers $[Q_1, Q_2, Q_3, Q_4, Q_{\bm{z}}]$. ``Hyper'' denotes the hyperprior. The column $K$ reports the corresponding logarithmic codebook configuration $[\log K_1, \log K_2, \log K_3, \log K_4, \log K_{\bm{z}}]$.}
  \label{tab:Ablation_codewords}
  \centering
  \newcommand{\n}[1]{\makebox[1.1em][c]{#1}}
  \newcommand{\dsh}{\makebox[1.1em][c]{--}}
  \begin{tabular}{@{}lrr@{}}
    \toprule
    Modules & $K$ & BD-rate \\
    \midrule
    VQ             & $[\n{10}, \n{10},   \n{10},   \n{10},   \dsh]$   & 0.00\%     \\
    \midrule
    +Hyper               & $[\n{10}, \n{10}, \n{10}, \n{10},   \n{10}]$ & -9.37\%\\
    +Hyper+Autoregressive      & $[\n{10}, \n{10}, \n{10}, \n{10}, \n{10}]$ & -16.72\%\\ 
    +Hyper+Autoregressive      & $[\n{10}, \n{9},  \n{8},  \n{7},  \n{10}]$ & \textbf{-28.60\%}\\
    +Hyper+Autoregressive      & $[\n{10}, \n{8},  \n{6},  \n{6},  \n{10}]$ & -28.35\%\\
    \bottomrule
  \end{tabular}
\end{table}

\section{Applications}
A major practical limitation of existing LIC systems is that they require hybrid GPU--CPU execution, which prevents the model from being exported as a unified computation graph, such as ONNX, and thus complicates deployment on real devices \cite{vq-mcquic}. By eliminating entropy coding, EF-LIC removes the CPU-side dependency and enables end-to-end inference within a single accelerator-friendly computation graph, which greatly simplifies deployment. Building on this advantage, we successfully export EF-LIC as \textit{self-contained ONNX and TorchScript models}, and deploy it on embedded devices and smartphones. This level of portability is not achievable for entropy-coded LIC systems.

EF-LIC also improves numerical robustness across heterogeneous devices. Existing entropy-coded LIC systems require the encoder and decoder to produce exactly matched entropy-model probabilities. In cross-device deployment, however, tiny numerical discrepancies in floating-point computation may change the cumulative distribution functions used by entropy coding, desynchronize the bitstream, and eventually cause decoding failure. This issue has also been reported in DCVC-RT~\cite{dcvc-rt}. Since EF-LIC removes entropy coding and transmits fixed-length VQ indices, its decoding process does not depend on reproducing device-specific entropy-model probabilities. As a result, EF-LIC supports reliable cross-platform image encoding and decoding across different hardware backends.

\section{Limitations}
This work focuses on theoretically validating the effectiveness of EF-LIC under a reasonable distortion regime, and several engineering aspects remain to be improved.
First, the codebook sizes in EF-LIC are currently hand-designed.
Second, although RVQ is significantly faster than entropy coding, its runtime is still non-negligible and needs acceleration.
Third, while RVQ yields strong visual quality, its performance under pixel-wise criteria such as PSNR is less competitive.
Nevertheless, these limitations are orthogonal to the main purpose of this paper, and we leave further engineering optimizations to future work.

\section{Conclusion}
In this paper, we present EF-LIC to break the runtime bottleneck in typical LIC.
EF-LIC reduces statistical redundancy via unconstrained VQ and reduces correlation redundancy via a context-conditional autoregressive transform, while enabling flexible multi-rate operation.
We theoretically show that the resulting approach can match the compression performance of typical LIC.
Experiments demonstrate improved compression performance and substantially lower coding latency compared with state-of-the-art methods and several variants, validating EF-LIC as a new paradigm for LIC without entropy coding.

\section*{Software and Data}
The source code of EF-LIC is publicly available at \url{https://github.com/SevenCTHU/EF-LIC}.

\section*{Acknowledgments}
This work is supported by the National Key Research and Development Program of China under Grant No. 2023YFB2904300, the National Natural Science Foundation of China under Grant No. 62293484, No. 62441235, and No. 92570204, Beijing Natural Science Foundation (F251001 and L257005).

\section*{Impact Statement}
Entropy coding is ubiquitous in both traditional and learned image compression, but its sequential processing nature is difficult to parallelize on GPUs and limits throughput.
This work provides theoretical evidence that key redundancies in images can be reduced without entropy coding, and it instantiates this idea with a multi-rate entropy-coding-free codec that achieves competitive compression performance with lower coding latency.
By enabling lower-latency and more compute-efficient compression, this work may benefit real-time and on-device imaging applications.
To summarize, our contribution lies in establishing a theoretical and practical foundation for efficient learned image compression without entropy coding, paving the way for low-latency image compression.

\bibliography{main}
\bibliographystyle{icml2026}

\clearpage
\appendix
\section*{Appendix}
In the appendix, we provide the following:
\begin{itemize}
  \item \Cref{app:sec1} provides proofs of \cref{thm:distortion_entropy,thm:ar_dominance_context,thm:em}.
  \item \Cref{app:architecture} describes the model implementation and bitstream packing methods.
  \item \Cref{app:sec3} presents additional experimental details, including the exact settings of competing methods and the training losses used in our ablations.
  \item \Cref{app:sec4} reports additional results, including further entropy-gap analysis, BD-rate results on DISTS, results under more metrics, more runtime tests, and an additional LPIPS-based comparison with recent generative codecs.
\end{itemize}

\section{Proof of Theorems}\label{app:sec1}
In the main text, we present~\cref{thm:distortion_entropy,thm:ar_dominance_context,thm:em}, which form the theoretical basis of the proposed EF-LIC. This section provides detailed proofs.
\subsection{Proof of \cref{thm:distortion_entropy}}
\label{app:proof1}
\begin{proof}
Let $Q^*$ be any quantizer that attains the minimal distortion under the constraint $H(J)\le R$.
Recall \cref{eq:rd_function}, the R--D function of source $X$ is defined as
\begin{equation*}
D_X(R)\ =\ \inf_{P_{\hat{X}|X}:\ I(X;\hat{X})\le R}\ \mathbb{E}\!\left[d(X,\hat{X})\right].
\end{equation*}
For a well-defined distortion measure $d(X,\hat X)$, the R--D function is strictly decreasing over the distortion range of interest.
Consequently, its generalized inverse is well defined, which we denote by $R_X(D)$.
\begin{equation}
R_X(D)\ \triangleq\ \inf_{P_{\hat{X}|X}:\ \mathbb{E}\left[d(X,\hat{X})\right]\le D}\ I(X;\hat{X}).
\end{equation}
This means
\begin{equation}
  I(X;\hat X^*) \;\ge\; R_X(D^*).
\end{equation}
Since $J^*$ is a function of $X$ and $\hat X^*$ is a function of $J^*$, $X \to J^* \to \hat X^*$ forms a Markov chain and hence
\begin{equation}
  I(X;\hat X^*) \;\le\; I(X;J^*) \;\le\; H(J^*).
\end{equation}
Combining the two inequalities yields
\begin{equation}
  H(J^*) \;\ge\; R_X(D^*).
\end{equation}
On the other hand, from the definition of $D_X(R)$ we have
\begin{equation}
  D^* = D_X(R) \quad\Longrightarrow\quad R_X(D^*) \;\le\; R.
\end{equation}
Thus
\begin{equation}
  R_X(D^*) \;\le\; H(J^*) \;\le\; R.
  \label{eq:entropy-bracket}
\end{equation}

Since $R_X(D)$ is strictly decreasing on the distortion range of interest, its generalized inverse $D_X(R)$ is strictly decreasing in $R$. Hence, for any $R' < R$,
\begin{equation}
  D_X(R') > D_X(R) = D^*.
\end{equation}

Suppose, for the sake of contradiction, that $H(J^*) < R$. Choose any $R'$ such that
\begin{equation}
  H(J^*) \;\le\; R' \;<\; R.
\end{equation}
Because $Q^*$ satisfies $H(J^*) \le R'$, it is feasible for the optimization defining $D_X(R')$, so
\begin{equation}
  D_X(R') \;\le\; D^*.
\end{equation}
Combining this with the strict monotonicity of $D_X(\cdot)$, we obtain
\begin{equation}
  D_X(R') > D_X(R) = D^*,
\end{equation}
a contradiction. Therefore $H(J^*)$ cannot be strictly smaller than $R$, and together with $H(J^*) \le R$ this implies
\begin{equation}
  H(J^*) = R = n\log K.
\end{equation}
Using the definition of $\Delta H$ in \cref{eq:delta_H}, we have $\Delta H = 0$,
which completes the proof.
\end{proof}

\subsection{Proof of \cref{thm:ar_dominance_context}}
\label{app:proof2}

\begin{proof}

Obviously, when $e_i^{\mathrm{RD}}$ and $d_i^{\mathrm{RD}}$ are chosen as identity mappings and
$Q_i^{\mathrm{RD}}=Q_i^{\mathrm{IQ}}$ for all $i$, Scheme RD reduces to Scheme IQ.
Hence, for any rate $R$, every reconstruction achievable by IQ is also achievable by RD.
Therefore, the feasible set of RD contains that of IQ, which implies
\begin{equation}
D_X^{\mathrm{RD}}(R) \le D_X^{\mathrm{IQ}}(R), \quad \forall R \ge 0 .
\end{equation}

Next, at rate $R$ there exist a distortion level $D^\star$ on the distortion range of interest and an
IQ scheme achieving $R_X^{\mathrm{IQ}}(D^\star)$ such that the induced reconstruction
$\hat Y=(\hat Y_1,\dots,\hat Y_N)$ satisfies, for some $i$,
\begin{equation}
I(\hat Y_i;\hat Y_{<i})>0 .
\label{eq:mi_assump_new}
\end{equation}
Here we denote the generalized inverse
\begin{equation}
\begin{aligned}
R_X^{\mathrm{IQ}}(D)\triangleq \inf\{R\ge 0:\ D_X^{\mathrm{IQ}}(R)\le D\},
\\
R_X^{\mathrm{RD}}(D)\triangleq \inf\{R\ge 0:\ D_X^{\mathrm{RD}}(R)\le D\}.
\end{aligned}
\end{equation}
Let $S\triangleq \hat Y_{<i}$, so \eqref{eq:mi_assump_new} gives $I(\hat Y_i;S)>0$.

Since we are under Scheme IQ, the $i$-th group does not use $S$ when producing $\hat Y_i$.
Equivalently, $\hat Y_i \!\perp\!\!\!\perp S \mid Y_i$, and thus $S \to Y_i \to \hat Y_i$ forms a Markov chain.
By the data processing inequality,
\begin{equation}
I(\hat Y_i;S) \le I(Y_i;S).
\end{equation}
Therefore $I(\hat Y_i;S)>0$ implies $I(Y_i;S)>0$, meaning the side information is non-trivial.

Fix the coding rules of all groups $j\neq i$ in the above IQ scheme, and denote the resulting $\hat Y_{>i}$.
Define the induced side-information-dependent distortion
\begin{equation}
\begin{aligned}
&\bar d_i(y_i,\hat y_i,s) \triangleq \\
&\mathbb{E}\!\left[d\!\left(X,\ g_s(s,\hat y_i,\hat Y_{>i})\right)\ \middle|\ Y_i=y_i,\ S=s\right].
\end{aligned}
\end{equation}
Then for any choice of the $i$-th group, the overall distortion equals
$\mathbb{E}[\bar d_i(Y_i,\hat Y_i,S)]$ under the fixed rules of other groups.

Define the conditional R--D function with two-sided side information $S$ as
\begin{equation}
R_{i\mid S}(D)
\triangleq
\inf_{P_{\hat Y_i\mid Y_i,S}:\, \mathbb{E}[\bar d_i(Y_i,\hat Y_i,S)]\le D}
I(Y_i;\hat Y_i\mid S),
\end{equation}
and the counterpart without using $S$ as
\begin{equation}
R_i(D)
\triangleq
\inf_{P_{\hat Y_i\mid Y_i}:\, \mathbb{E}[\bar d_i(Y_i,\hat Y_i,S)]\le D}
I(Y_i;\hat Y_i).
\end{equation}
Since any $P_{\hat Y_i\mid Y_i}$ can be embedded into the conditional class by ignoring $S$,
\begin{equation}
R_{i\mid S}(D)\le R_i(D),\quad \forall D .
\label{eq:rd_leq}
\end{equation}
Moreover, $I(Y_i;S)>0$ shows that the side information is non-trivial.
Under the standard strictness result for two-sided side information with side-information-dependent distortion
\cite{linder2002source}, there exists (and we fix) the above $D^\star$ such that
\begin{equation}
R_{i\mid S}(D^\star) < R_i(D^\star).
\label{eq:rd_strict}
\end{equation}
Let $\delta \triangleq R_i(D^\star)-R_{i\mid S}(D^\star)>0$.

By the operational fixed-length rate--distortion theorem \cite{shannon1959coding},
for any $\epsilon>0$, any fixed-length code that does not use $S$ and achieves distortion at most $D^\star$
must have rate at least $R_i(D^\star)-\epsilon$, while there exists a fixed-length code using $S$ at both
encoder and decoder achieving distortion at most $D^\star$ with rate at most $R_{i\mid S}(D^\star)+\epsilon$.
Replacing only the $i$-th group in the above IQ scheme by such a two-sided side-information code (implemented
by sufficiently expressive $e_i^{\mathrm{RD}},d_i^{\mathrm{RD}},Q_i^{\mathrm{RD}}$) and keeping all
other groups unchanged yields an RD scheme achieving distortion at most $D^\star$ with total rate at most
$R_X^{\mathrm{IQ}}(D^\star)-\delta+2\epsilon$.
Letting $\epsilon\downarrow 0$, we obtain
\begin{equation}
R_X^{\mathrm{RD}}(D^\star) \le R_X^{\mathrm{IQ}}(D^\star)-\delta
< R_X^{\mathrm{IQ}}(D^\star).
\label{eq:global_rate_strict}
\end{equation}

Choose $R \triangleq R_X^{\mathrm{IQ}}(D^\star)-\delta/2$. Then $R\ge R_X^{\mathrm{RD}}(D^\star)$ by
\eqref{eq:global_rate_strict}, hence $D_X^{\mathrm{RD}}(R)\le D^\star$ by definition of generalized inverse.
On the other hand, since $R < R_X^{\mathrm{IQ}}(D^\star)$, we must have $D_X^{\mathrm{IQ}}(R)>D^\star$
(otherwise $R$ would belong to the set $\{r:\ D_X^{\mathrm{IQ}}(r)\le D^\star\}$ contradicting the definition
of $R_X^{\mathrm{IQ}}(D^\star)$). Therefore,
\begin{equation}
D_X^{\mathrm{RD}}(R) < D_X^{\mathrm{IQ}}(R)
\end{equation}
for some $R>0$, completing the proof.

\end{proof}

\subsection{Proof of \cref{thm:em}}
\label{app:proof3}
To prove \cref{thm:em}, we first give the following lemma.
\begin{lemma}
\label{lem:cpit}
Assume that $V$ takes values in a standard Borel space and that the conditional law
$V \mid (Z=z)$ is atomless for $P_Z$-almost every $z$.
Then there exists a measurable map $\Psi$ such that
\[
U \triangleq \Psi(V,Z)
\]
satisfies $U \mid (Z=z) \sim \mathrm{Unif}[0,1]$ for $P_Z$-almost every $z$.

As a consequence, for any integer $M \ge 1$, the random variable
\[
B \triangleq 1+\lfloor MU\rfloor \in \{1,\dots,M\}
\]
is conditionally uniform on $\{1,\dots,M\}$ given $Z$.
Moreover, $B$ is a measurable function of $(V,Z)$.
\end{lemma}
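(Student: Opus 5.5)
The plan is to use the conditional probability integral transform (the conditional Rosenblatt construction). Since $V$ takes values in a standard Borel space, Kuratowski's theorem gives a Borel isomorphism $\iota$ from the space of $V$ onto a Borel subset of $\mathbb{R}$. Such an isomorphism transports the conditional law of $V$ given $Z=z$ to a law on $\mathbb{R}$, and it preserves the atomless property. So we may assume without loss of generality that $V$ is real-valued.

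Because $V$ is real-valued in a standard Borel space, a regular conditional distribution of $V$ given $Z$ exists. We fix one and define its jointly measurable conditional CDF
\[
F(v\mid z) \triangleq P(V\le v\mid Z=z).
\]
For joint measurability we note two facts. For each rational $v$, the map $z\mapsto F(v\mid z)$ is measurable. For each $z$, the map $v\mapsto F(v\mid z)$ is right-continuous. Together these make $F$ measurable on $\mathbb{R}\times\mathcal{Z}$, since $F(v\mid z)=\lim_k F(q_k\mid z)$ along rationals $q_k\downarrow v$. We then set $\Psi(v,z)\triangleq F(\iota(v)\mid z)$, which is a measurable function of $(V,Z)$.

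The key step is the classical fact that if $W$ has a continuous CDF $G$, then $G(W)\sim\mathrm{Unif}[0,1]$. We apply it conditionally. Fix $z$ outside a $P_Z$-null set, so that $F(\cdot\mid z)$ is continuous; this holds exactly because the conditional law is atomless. For $u\in(0,1)$, let $v_u=\sup\{v: F(v\mid z)\le u\}$. By continuity, $\{F(W\mid z)\le u\}=\{W\le v_u\}$ up to a set of conditional probability zero, since flat pieces of $F$ carry no mass. Hence $P(U\le u\mid Z=z)=F(v_u\mid z)=u$. Since $U=\Psi(V,Z)$, the regular conditional law of $U$ given $Z=z$ is the pushforward of the conditional law of $V$ under $\Psi(\cdot,z)$. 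This gives $U\mid(Z=z)\sim\mathrm{Unif}[0,1]$ for $P_Z$-almost every $z$.

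For the consequence, take $B=1+\lfloor MU\rfloor$. The event $\{U=1\}$ has conditional probability zero, so $B\in\{1,\dots,M\}$ almost surely. For the value $1$ we may redefine $\Psi$ by $\min(\Psi,1-0)$, or set $B=M$ on $\{U=1\}$. Then $P(B=b\mid Z=z)=P(U\in[(b-1)/M,b/M)\mid Z=z)=1/M$. Measurability of $B$ follows because $B$ is the composition of the floor function with the measurable $\Psi$.

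The main obstacle is the technical one in the second paragraph: obtaining a version of the conditional CDF that is jointly measurable in $(v,z)$, and doing so with only a countable exceptional null set of $z$. I would handle this with the rational-skeleton, right-continuous-extension argument rather than appealing to abstract disintegration theorems.
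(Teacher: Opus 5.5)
Your proposal is correct and is the standard conditional probability integral transform: a one-step Rosenblatt construction via a jointly measurable conditional CDF, which the paper invokes by citation without writing out a proof, so you are supplying the argument it omits. Two small touch-ups: the ``$\min(\Psi,1-0)$'' fix is vacuous as written, so keep your other fix (set $B=M$ on the conditionally null event $\{U=1\}$, which the lemma as stated silently ignores), and choose the rational approximants measurably in $v$, e.g.\ $q_k(v)=\lceil 2^k v\rceil/2^k$, so that $F$ is explicitly a pointwise limit of jointly measurable functions.
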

This statement is a standard conditional version of the probability integral transform (CPIT)
and is closely related to Rosenblatt's transform~\cite{Rosenblatt1952}.

Then we give the main proof of \cref{thm:em}.

\begin{proof}
Fix $R>0$ and an arbitrary $\varepsilon\in(0,1)$.
Let $\{g_a,g_s,\{Q_i^{\mathrm{CM}},f_i^{\mathrm{CM}}\}_{i=1}^N\}$ be a CM scheme feasible at rate $R$ that attains $D_X^{\mathrm{CM}}(R)$.
For each $i\in\{1,\dots,N\}$, define the CM symbol and context by
\begin{equation}
S_i \triangleq \hat Y_i^{\mathrm{CM}}=Q_i^{\mathrm{CM}}(Y_i),
\end{equation}
\begin{equation}
Z_i \triangleq \hat Y_{<i}^{\mathrm{CM}}.
\end{equation}
The CM rate constraint in \eqref{eq:em-opt} is a conditional cross-entropy under the learned model.
Define the intrinsic conditional entropy
\begin{equation}
R_0 \triangleq \sum_{i=1}^N H(S_i\mid Z_i).
\end{equation}
Assume the learned model assigns positive probability to every symbol in the support of $S_i$ given $Z_i$.
Then cross-entropy dominates entropy and CM feasibility at rate $R$ implies
\begin{equation}
R_0 \le R.
\end{equation}

Set the RD fixed-length rate budget
\begin{equation}
R' \triangleq \frac{R}{1-\varepsilon}.
\end{equation}
Choose integers $\{K_i\}_{i=1}^N$ and define $\mathcal{J}_i \triangleq \{1,\dots,K_i\}^{n_i}$.
We impose the fixed-length equality
\begin{equation}
\sum_{i=1}^N n_i \log K_i = R'.
\end{equation}
Such a choice exists up to a rounding slack that is at most a constant number of bits.
This slack is negligible and can be made arbitrarily small by standard blocklength scaling; in particular,
it does not prevent taking $\varepsilon$ arbitrarily small.

We now allocate additional uniform indices that will fill the entropy gap without changing the reconstruction.
Choose integers $\{M_i\}_{i=1}^N$ such that
\begin{equation}
\sum_{i=1}^N \log M_i \ge R-R_0,
\end{equation}
and such that the slack in this inequality is negligible by the same rounding argument.

We assume the CM quantizers use finite alphabets, as in practical systems.
This means $\mathrm{supp}(S_i)$ is finite for each $i$.
We also assume the fixed-length budgets are compatible with these alphabets, so that after choosing $\{K_i\}$ and $\{M_i\}$ we have
\begin{equation}
|\mathrm{supp}(S_i)|\, M_i \le K_i^{n_i}
\qquad
\text{for every } i.
\end{equation}
Define
\begin{equation}
\mathcal{T}_i \triangleq \mathrm{supp}(S_i)\times\{1,\dots,M_i\}.
\end{equation}
Fix an injective map
\begin{equation}
\iota_i:\ \mathcal{T}_i\ \hookrightarrow\ \mathcal{J}_i.
\end{equation}

We now extract uniform randomness from the continuous residual variability.
Assume that the conditional law $Y_i\mid(Z_i,S_i)$ is atomless for $P$-almost every $(Z_i,S_i)$.
By \cref{lem:cpit}, there exists a measurable map $\Psi_i$ such that
\begin{equation}
U_i \triangleq \Psi_i(Y_i,Z_i,S_i)
\end{equation}
satisfies $U_i\mid(Z_i,S_i)\sim \mathrm{Unif}[0,1]$ almost surely.
Define
\begin{equation}
B_i \triangleq 1+\lfloor M_i U_i\rfloor.
\end{equation}
Then $B_i\mid(Z_i,S_i)$ is uniform on $\{1,\dots,M_i\}$.
It follows that
\begin{equation}
H(B_i\mid Z_i,S_i)=\log M_i,
\end{equation}
and therefore
\begin{equation}
H(S_i,B_i\mid Z_i)=H(S_i\mid Z_i)+\log M_i.
\end{equation}

We now construct an RD scheme at fixed-length rate $R'$ that reproduces the CM reconstruction exactly.
Fix an injective embedding $\phi_i:\mathcal{J}_i\to\mathbb{R}^{n_i}$.
Choose a deterministic quantizer $Q_i^{\mathrm{RD}}$ such that
\begin{equation}
Q_i^{\mathrm{RD}}(\phi_i(j))=j
\qquad
\text{for all } j\in\mathcal{J}_i.
\end{equation}
Define the RD encoder by
\begin{equation}
Y_i' \triangleq e_i^{\mathrm{RD}}(Y_i,\hat Y_{<i}^{\mathrm{RD}})
\triangleq
\phi_i\!\big(\iota_i(S_i,B_i)\big).
\end{equation}
Define the RD quantizer output index and codeword by
\begin{equation}
J_i^{\mathrm{RD}} \triangleq Q_i^{\mathrm{RD}}(Y_i'),
\end{equation}
and fix a bijection $C_i:\mathcal{J}_i\to\mathcal{C}_i$ and set
\begin{equation}
\hat Y_i' \triangleq Q_i^{\mathrm{RD}}(Y_i') \triangleq C_i(J_i^{\mathrm{RD}}).
\end{equation}
Define the RD decoder to recover $(S_i,B_i)$ and output only the CM symbol
\begin{equation}
\hat Y_i \triangleq d_i^{\mathrm{RD}}(\hat Y_i',\hat Y_{<i}^{\mathrm{RD}})
\triangleq
\pi_S\!\Big(\iota_i^{-1}\!\big(C_i^{-1}(\hat Y_i')\big)\Big),
\end{equation}
where $\pi_S$ projects $(S_i,B_i)$ onto $S_i$.

By construction, $\hat Y_i^{\mathrm{RD}}=\hat Y_i^{\mathrm{CM}}$ for every $i$.
Therefore
\begin{equation}
\hat Y^{\mathrm{RD}}=\hat Y^{\mathrm{CM}}.
\end{equation}
Applying the same synthesis transform yields
\begin{equation}
\hat X^{\mathrm{RD}}=g_s(\hat Y^{\mathrm{RD}})=g_s(\hat Y^{\mathrm{CM}})=\hat X^{\mathrm{CM}}.
\end{equation}
Hence the distortions coincide
\begin{equation}
\mathbb{E}\big[d(X,\hat X^{\mathrm{RD}})\big]
=
\mathbb{E}\big[d(X,\hat X^{\mathrm{CM}})\big]
=
D_X^{\mathrm{CM}}(R).
\end{equation}

\begin{figure*}[t]
  \centering
  \begin{subfigure}[c]{0.67\linewidth}
    \includegraphics[width=0.99\linewidth]{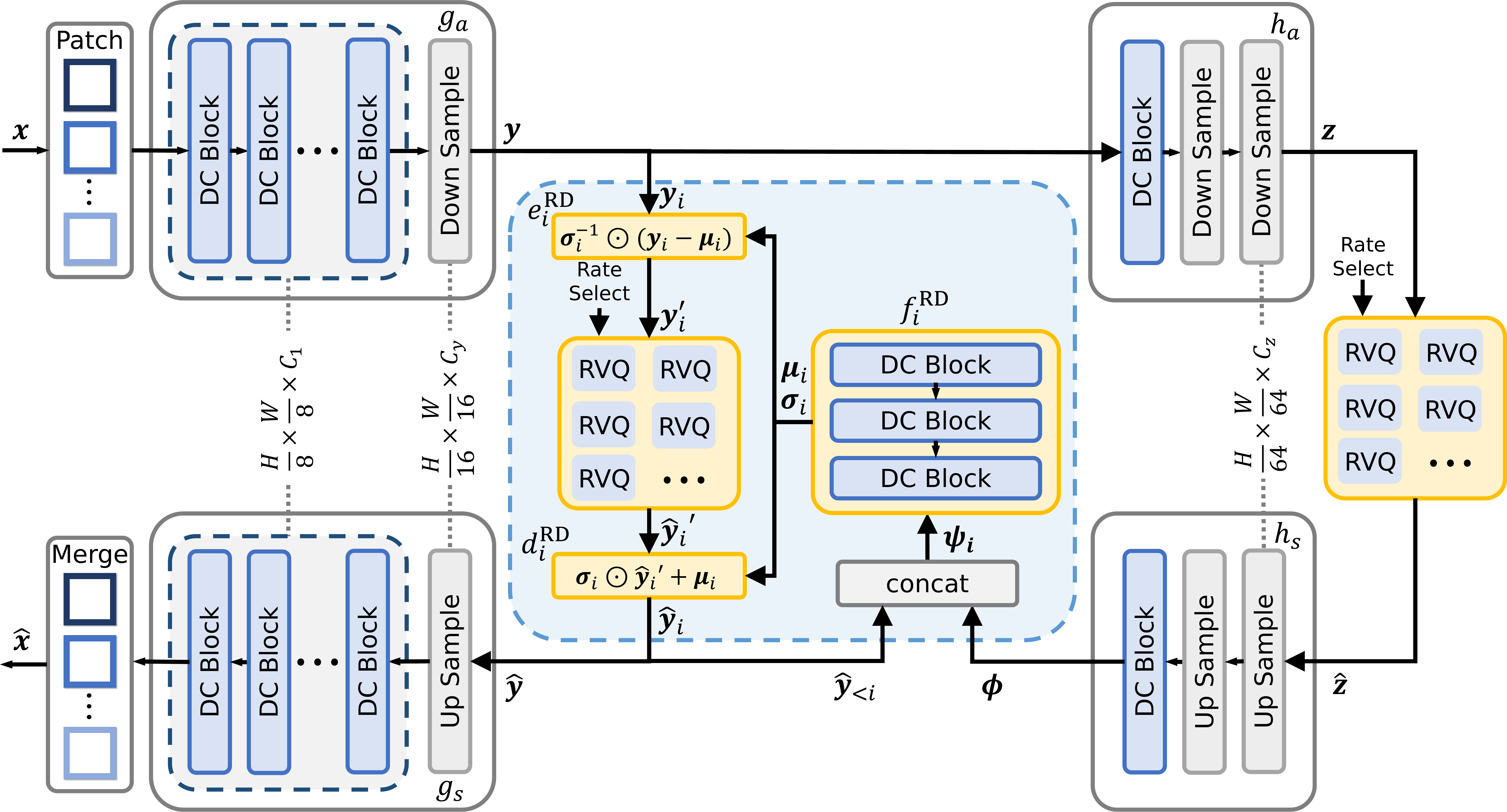}
    \caption{The detailed implementation of EF-LIC.}
    \label{fig:short-a}
  \end{subfigure}
  \hfill
  \begin{subfigure}[c]{0.315\linewidth}
    \begin{subfigure}{0.98\linewidth}
      \includegraphics[width=0.99\linewidth]{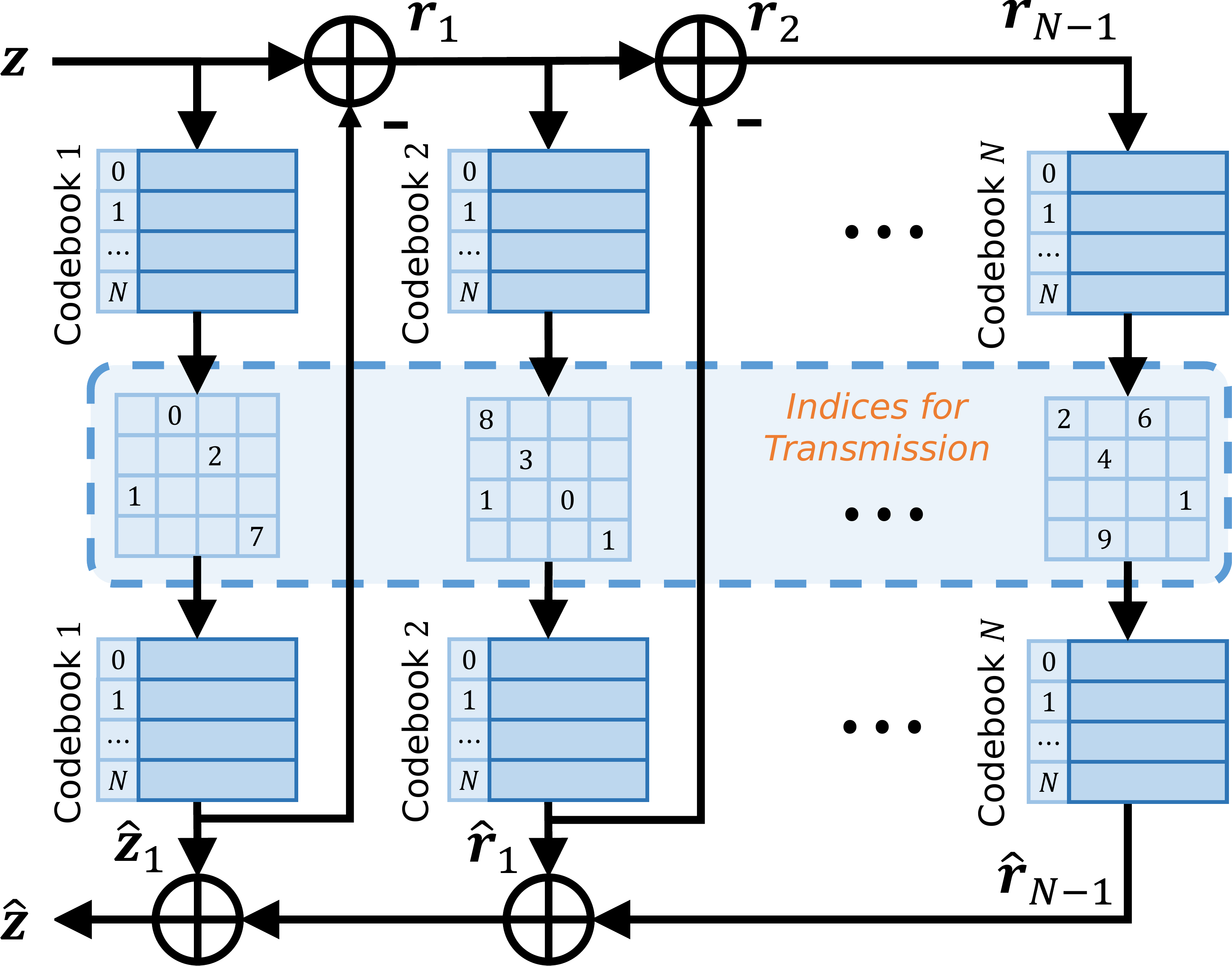}
      \caption{Residual vector quantizer (RVQ).}
      \label{fig:short-b}
    \end{subfigure}
    \vbox to 1em{}  
    \vfill
    \begin{subfigure}{0.98\linewidth}
      \includegraphics[width=0.99\linewidth]{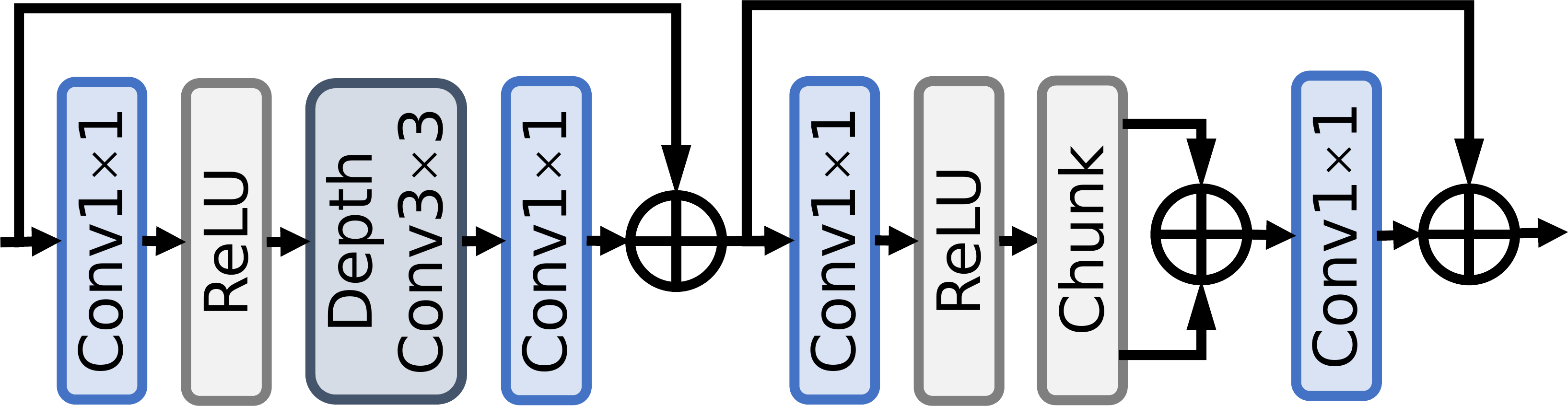}
      \caption{DC Block.}
      \label{fig:short-c}
    \end{subfigure}
  \end{subfigure}
  \caption{(a) Implementation details of EF-LIC, which largely follow DCVC-RT \cite{dcvc-rt}. The quantizer is realized as a set of RVQ modules with different numbers of codebooks, denoted by $m$. A rate-selection key determines which quantizer is used for a given inference. (b) RVQ architecture, following \cite{speech-dac}. (c) DC block architecture, following \cite{dcvc-rt}.}
  \label{fig:short2}
\end{figure*}

It remains to verify that the constructed indices satisfy the entropy-gap bound \eqref{eq:entropy_gap}.
Since $\hat Y_{<i}^{\mathrm{RD}}=\hat Y_{<i}^{\mathrm{CM}}=Z_i$, we have
\begin{equation}
H(J_i^{\mathrm{RD}}\mid \hat Y_{<i}^{\mathrm{RD}})
=
H(J_i^{\mathrm{RD}}\mid Z_i).
\end{equation}
The map $(S_i,B_i)\mapsto J_i^{\mathrm{RD}}=\iota_i(S_i,B_i)$ is injective.
Injective re-encodings preserve conditional entropy, so
\begin{equation}
H(J_i^{\mathrm{RD}}\mid Z_i)=H(S_i,B_i\mid Z_i).
\end{equation}
Using the identity for $H(S_i,B_i\mid Z_i)$ yields
\begin{equation}
H(J_i^{\mathrm{RD}}\mid \hat Y_{<i}^{\mathrm{RD}})
=
H(S_i\mid Z_i)+\log M_i.
\end{equation}
Summing over $i$ gives
\begin{equation}
\sum_{i=1}^N H(J_i^{\mathrm{RD}}\mid \hat Y_{<i}^{\mathrm{RD}})
=
R_0+\sum_{i=1}^N \log M_i
\ge
R.
\end{equation}
Using $\sum_{i=1}^N n_i\log K_i=R'$ and the definition of $\Delta \bar H$ in \eqref{eq:entropy_gap}, we obtain
\begin{equation}
\begin{aligned}
\Delta \bar H
&=
\frac{\sum_{i=1}^N \left(n_i\log K_i-H(J_i^{\mathrm{RD}}\mid \hat Y_{<i}^{\mathrm{RD}})\right)}{\sum_{i=1}^N n_i\log K_i} \\
&\le
\frac{R'-R}{R'} \\
&=
\varepsilon,
\end{aligned}
\end{equation}
up to the negligible rounding slack in the choices of $\{K_i\}$ and $\{M_i\}$.

Thus the constructed RD scheme is feasible at rate $R'$ and satisfies $\Delta\bar H\le\varepsilon$.
Since $D_X^{\mathrm{RD}}(R')$ is the infimum distortion over all such RD schemes, we conclude
\begin{equation}
D_X^{\mathrm{RD}}(R')
\le
\mathbb{E}\big[d(X,\hat X^{\mathrm{RD}})\big]
=
D_X^{\mathrm{CM}}(R).
\end{equation}
Substituting $R'=R/(1-\varepsilon)$ yields
\begin{equation}
D_X^{\mathrm{RD}}\!\left(\frac{R}{1-\varepsilon}\right)\le D_X^{\mathrm{CM}}(R).
\end{equation}
Finally, since $\varepsilon\in(0,1)$ was arbitrary, letting $\varepsilon\downarrow 0$ shows that the rate overhead can be made arbitrarily small.
\end{proof}

\section{Detailed Model Architectures}
\label{app:architecture}
In this section, we provide additional implementation details of EF-LIC.
As shown in \cref{fig:short2}, the EF-LIC backbone is composed of DC Blocks, which implement depthwise separable convolutions following \cite{dcvc-rt}.
In this architecture, \texttt{Patch} denotes a pixel-unshuffle operation with an upscaling factor of 8, and \texttt{Merge} denotes the inverse operation.
We set $C_1=368$, $C_y=256$, and $C_z=128$.
We implement RVQ following \cite{speech-dac}, as illustrated in \cref{fig:short-b}.
To support multiple bitrates, we use a set of independent RVQ modules, where each RVQ uses a different number of codebooks $m$.
In the main text, we set $m \in \{1,2,3,4,5\}$ to cover a sufficiently wide bitrate range.
At inference time, in addition to the input image, the model takes a rate-selection parameter $q$ that determines which RVQ is used for quantization.

As shown in \cref{fig:short-b}, we transmit the quantized indices produced by each RVQ codebook.
Each index tensor $j$ has shape $1 \times h \times w$.
We flatten the indices from all codebooks into a one-dimensional vector.
Within each RVQ, we concatenate the flattened indices in codebook order.
We then concatenate the RVQ vectors in the order $Q_{\bm{z}} \rightarrow Q_1 \rightarrow Q_2 \rightarrow Q_3 \rightarrow Q_4  $.
For transmission, we prepend a header containing $H$, $W$, and $q$, where $H \times W$ is the input image resolution and $q$ is the rate-selection parameter.
The header takes 28 bits for $H$ and $W$ and 4 bits for $q$, which is negligible compared to the overall bitrate.
Given a fixed model, the mapping from $H \times W$ to the index grid $h \times w$ is deterministic, and the number of codebooks and codewords in each RVQ is fixed.
Therefore, these header fields are sufficient to parse the stream and recover all RVQ indices.
Notably, our index packing introduces no sequential dependency and requires no expensive operations beyond concatenation in a predefined order.
As a result, both encoding and decoding are highly efficient and take less than 1\,ms in total in our implementation.
Furthermore, the bit packing and unpacking of the previous quantizer are independent of the computation of the subsequent quantizer, so the two can be overlapped in parallel, making the end-to-end latency of this step nearly negligible.


\section{Experimental Details}\label{app:sec3}
\subsection{Performance Details}
This section provides additional details on the baselines described in the main text.
For H.266/VVC (VTM)~\cite{vtm}, we adopt its intra-only coding configuration, which is among the strongest engineered baselines for still-image compression.
We evaluate VTM v23.10 to reflect contemporary encoder and decoder runtimes and to enable a fair speed comparison.
We compile VTM on Linux and run intra coding with the following command:
\begin{verbatim}
EncoderApp
  -i [input.yuv]
  -c encoder_intra_vtm.cfg
  -o [output.yuv] 
  -b [output.bin]
  --wdt [width] 
  --hgt [height] 
  -q [QP]
  --InputBitDepth=8 
  -fr 1 
  -f 1
  --InputChromaFormat=420
\end{verbatim}
We use YUV420-formatted inputs, as this chroma subsampling setting yields faster runtimes.
For Control-GIC~\cite{cgic}, we exhaustively search over all granularity combinations using a step size of $0.01$ and report the best-performing configuration.
We observe substantial quality degradation for Control-GIC when the BPP falls below 0.15.
Following the protocol in the original paper, we restrict BD-rate computation to the range $\text{BPP} \ge 0.15$.
In addition, we find that the encoding and decoding runtime of Control-GIC grows approximately quadratically with the number of pixels, whereas the other models scale approximately linearly.
At a resolution of $256 \times 256$, our measured encoding and decoding times closely match those reported in the original paper.
At the standard Kodak resolution of $512 \times 768$, however, Control-GIC becomes substantially slower.
On DIV2K and CLIC2020, we use the official tiling function to prevent out-of-memory errors.
For OSCAR~\cite{oscar}, we evaluate the author-released code and pretrained models.
The official implementation, however, does not support high-resolution image evaluation.
So we offload selected model components to CPU memory during inference.
For RDEIC~\cite{rdeic}, we use the checkpoint at step 2.
The official implementation of HiFiC~\cite{hific} depends on an older TensorFlow release and does not run on recent GPUs such as the NVIDIA A100 or RTX 5090.
For comparability, we instead use a community PyTorch reimplementation together with its released pretrained weights.
For all other baselines, we use the official implementations and pretrained checkpoints.

We compute LPIPS~\cite{lpips} with the \texttt{lpips} Python package, normalizing inputs to $[-1, 1]$ as in the official setup and using pretrained VGG~\cite{vgg} weights, which are commonly adopted for LPIPS-based visual-quality evaluation.
We compute DISTS~\cite{dists} with \texttt{DISTS\_pytorch} and normalize inputs to $[0, 1]$.
We measure FLOPs with the \texttt{calflops} Python library and follow the convention $1~\text{FLOP}=2~\text{MACs}$.
Bj{\o}ntegaard delta rate (BD-rate)~\cite{bd-rate} measures the average bitrate difference between two methods over a specified quality range.
We compute BD-rate as the area between the two R--D curves after interpolating them with a monotonic piecewise cubic Hermite interpolating polynomial (PCHIP).
A negative BD-rate indicates that the proposed method achieves the same quality at a lower bitrate than the baseline.
We use the \texttt{bjontegaard} Python library to perform these calculations.

\subsection{Ablation Details}
\label{app:exp_detail}

In this section, we provide additional training details for the ``UQ+EC'' and ``VQ+EC'' models in \cref{sec:ablation}.
For typical LIC ``UQ+EC'', we optimize an objective that includes an explicit rate term $R$ weighted by a Lagrange multiplier $\lambda$. The training loss is defined as
\begin{equation}
\mathcal{L} = D + \lambda R,
\end{equation}
where
\begin{equation}
    D = \lVert \bm{x} - \hat{\bm{x}}_m\rVert_1
      + \lambda_{\mathrm{per}} \, \mathcal{L}_{\mathrm{per}}(\bm{x}, \hat{\bm{x}}_m) + \lambda_{\mathrm{adv}} \, \mathcal{L}_{\mathrm{adv}}(\bm{x}, \hat{\bm{x}}_m),
\end{equation}
which is consistent with \cref{eq:loss} for EF-LIC. $R$ is the expected bitrate estimated by the context model, where
\begin{equation}
  \label{eq:rate_loss}
  R = \sum_{i=1}^N
  \mathbb{E}\!\left[
    -\log P_{\hat Y_i\mid \hat Y_{<i}}(\hat Y_i \mid \hat Y_{<i};\theta_i)
  \right].
\end{equation}

The Lagrange multiplier $\lambda$ controls the resulting bitrate.
We train models with $\lambda \in \{0.5, 0.75, 1.0, 1.5, 2.0\}$ to span a bitrate range comparable to that of EF-LIC.
The resulting R--D curves are shown in \cref{fig:ablation}, where EF-LIC, ``UQ+EC'', ``VQ+EC'', and the VQ baseline cover similar bitrate ranges.

For ``VQ+EC'', VQ blocks gradients to the context model.
We therefore add an explicit rate loss term $R$ consistent with \cref{eq:rate_loss} to the objective in \cref{eq:loss}.

\begin{figure}[t]
  \centering
  \includegraphics[width=0.95\linewidth]{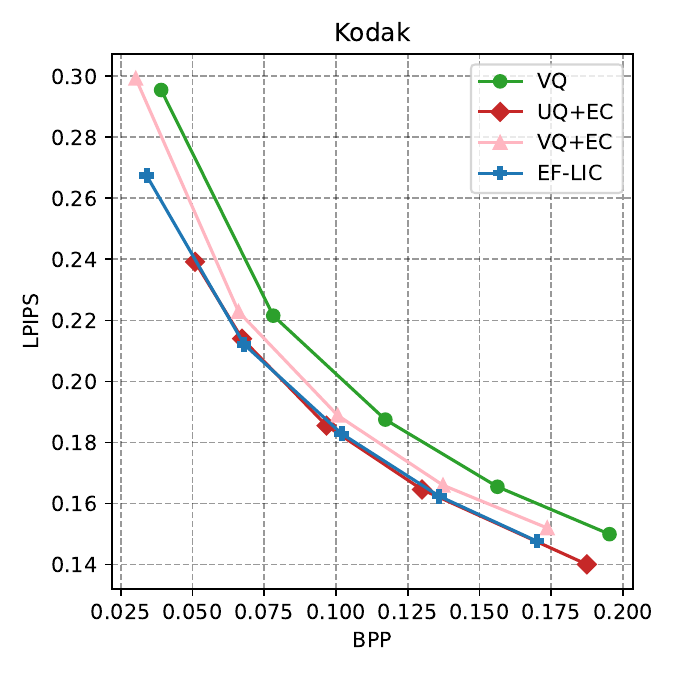}
  \caption{R-D performance on the Kodak dataset, evaluated with LPIPS vs. BPP. Curves closer to the lower-left  are better.}
  \label{fig:ablation}
\end{figure}

\section{Extra Experimental Results}\label{app:sec4}
\subsection{Quantitative Results for Entropy Gap}\label{app:entropy}
In \cref{fig:entropy}, we report $\Delta H$ for each codebook on the Kodak dataset when RVQ uses five codebooks.
Using \cref{eq:entropy_gap}, we obtain $\Delta \bar H = 3.42\%$, which is consistent with the conclusions in \cref{thm:distortion_entropy} and \cref{eq:pdf_vq}.
In addition, the quantizer for the latents $\bm{y}$ exhibits high codebook utilization, whereas the hyperprior quantizer $Q_{\bm{z}}$ for $\bm{z}$ shows low utilization.
This suggests that, while performing decorrelation in the representation domain, the method also regularizes the latent distribution, making it easier for VQ to learn the probability shaping.

\begin{figure}[t]
  \centering
  \includegraphics[width=0.95\linewidth]{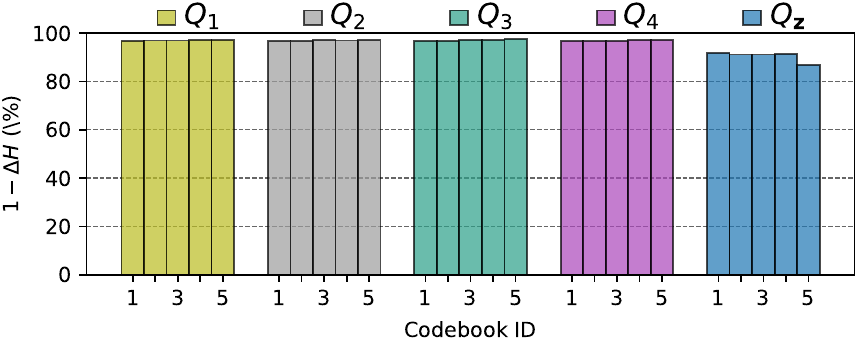}
  \caption{Normalized codebook entropy for each codebook in $Q_{1}$--$Q_{4}$ and $Q_{\bm{z}}$, where there are 5 codebooks in each RVQ. Each bar reports $1-\Delta H$ for the corresponding quantizer. A higher bar denotes less statistical redundancy.}
  \label{fig:entropy}
\end{figure}

\begin{table}[t]
  \caption{Comparison of BD-rate on the Kodak, Tecnick, DIV2K, and CLIC 2020 datasets evaluated under DISTS. Best results are in \textbf{bold}. Second-best are \underline{underlined}.}
  \label{tab:dists}
  \centering
  \begingroup
  \setlength{\tabcolsep}{4.0pt} 
  \begin{tabular}{@{}lrrrr@{}}
    \toprule
    \multirow{2}{*}{Method} &
    \multicolumn{4}{c}{BD-rate (DISTS)} \\
    \cmidrule(lr){2-5}
    & Kodak & Tecnick & DIV2K & CLIC2020\\
    \midrule
    HiFiC & 90.08\%   & 99.67\%   & 100.76\%   & 124.45\%  \\
    Control-GIC & 34.18\%   & 67.12\%   & 62.09\%   & 110.76\%  \\
    MS-ILLM  & 0.00\% & 0.00\% & 0.00\% & 0.00\% \\
    DiffEIC  & -33.79\%  & 23.68\%   & 15.78\%        & 59.91\%        \\
    OSCAR & -50.63\%  & -4.76\%    & -20.57\%       & 36.02\%          \\
    RDEIC & -39.92\%  & -3.01\%   & -4.46\%        & 13.67\%        \\
    \midrule
    EF-LIC-s & \underline{-58.61\%}  & \underline{-18.75\%}    & \underline{-45.30\%}  & \underline{-10.62\%}  \\
    EF-LIC  & \textbf{-70.61\%}  & \textbf{-43.12\%}   & \textbf{-60.43\%}  & \textbf{-42.15\%}  \\
    \bottomrule
  \end{tabular}
  \endgroup
\end{table}

\subsection{Quantitative Results on DISTS}
\label{app:dists}
In \cref{fig:RD-curve}, we have presented the R--D curves of EF-LIC and the baseline methods on multiple datasets, measured using DISTS \cite{dists}.
In this section, we further report quantitative BD-rate comparisons under DISTS, as summarized in \cref{tab:dists}.
EF-LIC significantly outperforms the baseline methods evaluated under DISTS as well.
Moreover, EF-LIC and EF-LIC-s are the only methods that achieve better DISTS performance than MS-ILLM on every dataset, especially CLIC 2020.

\subsection{Additional Comparison with Recent Generative Codecs}

We supplement an additional comparison with recent generative image compression methods, including GLC~\cite{vq-gic}, DLF~\cite{vq-DLF}, StableCodec~\cite{diffusion-StableCodec}, and OneDC~\cite{diffusion-xue2025one}.
The results are reported on Kodak and summarized in \cref{tab:extra_generative_comparison}. BD-rate is calculated with LPIPS, where OneDC is used as the anchor. EF-LIC achieves the best BD-rate while using substantially fewer FLOPs and parameters.

\begin{table*}[t]
  \caption{Additional comparison with recent generative image compression methods on Kodak measured with LPIPS. More negative BD-rate means lower bitrate at the same LPIPS. OneDC is used as the anchor. Best results are in \textbf{bold}. ``Enc./Dec.'' reports per-image encoding/decoding time.}
  \label{tab:extra_generative_comparison}
  \centering
  \begin{tabular}{@{}lrrrrr@{}}
    \toprule
    Method & Enc.(ms) & Dec.(ms) & FLOPs(G) & Params(M) & BD-rate \\
    \midrule
    GLC~\cite{vq-gic} 
      & 34.75  & 48.26  & 2466.28 & 163.99  & 21.50\% \\
    DLF~\cite{vq-DLF} 
      & 189.99 & 247.30 & 5809.71 & 1286.01 & 20.63\% \\
    StableCodec~\cite{diffusion-StableCodec} 
      & 100.18 & 200.03 & 6820.28 & 1065.81 & 8.03\% \\
    OneDC~\cite{diffusion-xue2025one} 
      & 100.50 & 235.03 & 7142.91 & 1406.42 & 0.00\% \\
    \midrule
    EF-LIC (Ours) 
      & \textbf{17.62} & \textbf{13.72} & \textbf{279.61} & \textbf{35.74} & \textbf{-3.33\%} \\
    \bottomrule
  \end{tabular}
\end{table*}

\begin{table*}[t]
  \caption{Comparison of GPU runtimes (ms) and memory (GB) for image encoding and decoding across different resolutions. Enc./Dec. denote encoding/decoding times. Mem. denotes memory usage. Best results are in \textbf{bold}. Second-best are \underline{underlined}.}
  \label{tab:high_resolution_time}
  \centering
  \begingroup
  \setlength{\tabcolsep}{3.5pt} 
  \begin{tabular}{@{}lrrrrrrrrrrrr@{}}
    \toprule
    \multirow{2}{*}{Method} & \multicolumn{3}{c}{512$\times$768} & \multicolumn{3}{c}{1080$\times$1920} & \multicolumn{3}{c}{1440$\times$2560} & \multicolumn{3}{c}{2160$\times$3840}\\
    \cmidrule(lr){2-4} \cmidrule(lr){5-7} \cmidrule(lr){8-10} \cmidrule(lr){11-13}
     & Enc. & Dec. & Mem. & Enc. & Dec. & Mem. & Enc. & Dec. & Mem. & Enc. & Dec. & Mem. \\
    \midrule
    H.266/VVC & $>$10s & 150.30 & -- & $>$10s & 230.10 & -- & $>$10s & 288.16 & -- & $>$10s & 486.71 & --  \\
    LIC-HPCM & 62.37 & 82.88 & 0.53 & 309.80 & 342.95 & 1.98 & 465.91 & 474.49 & 2.84 & 1121.92 & 1147.79 & 6.05  \\
    DCVC-RT & \underline{14.09} & 17.08 & 0.34 & 76.68 & 59.87 & 1.04 & 135.87 & 102.95 & 1.73 & 259.86 & 197.83 & 3.63  \\
    \midrule
    HiFiC & 526.51  & 1408.60 & 1.14 & 2894.55 & 6909.92 & 2.97 & 5179.44 & $>$10s & 4.78 & $>$10s & $>$10s & 9.75  \\
    Control-GIC & 103.56  & 436.26 & 6.53 & 610.76 & 2186.30 & 69.99 & -- & -- & $>$80 & -- & $>$10s & $>$80  \\
    MS-ILLM & 165.38  & 147.79 & 1.12 & 350.85 & 379.01 & 2.99 & 516.47 & 601.18 & 4.87 & 1305.93 & 1613.82 & 9.94  \\
    DiffEIC & 210.18 & 4661.74 & 6.86 & -- & $>$10s & -- & -- & $>$10s & -- & -- & $>$10s & --  \\
    OSCAR & 53.04 & 167.56 & 5.57 & 513.20 & 1123.38 & 24.75 & 2382.55 & 2719.79 & 79.44 & -- & -- & $>$80 \\
    RDEIC & 157.25  & 426.68 & 6.86 &683.20&2296.58& 13.55 &1436.00&5963.77& 20.17 & --&$>$10s& -- \\
    \midrule
    EF-LIC & 17.62 & \underline{13.72} & \underline{0.25} & \underline{19.53} & \underline{35.62} & \underline{0.65} & \underline{31.58} & \underline{55.50} & \underline{1.04} & \underline{65.24} & \underline{116.28} & \underline{2.10}  \\
    EF-LIC-s & \textbf{9.94} & \textbf{6.26} & \textbf{0.15} & \textbf{11.66} & \textbf{14.85} & \textbf{0.55} & \textbf{14.39} & \textbf{22.77} & \textbf{0.94} & \textbf{30.18} & \textbf{45.31} & \textbf{2.00}  \\
    \bottomrule
    
  \end{tabular}
  \endgroup
\end{table*}

\subsection{Runtime Analysis on High-Resolution Images}
\label{app:runtime}

In this section, we report the encoding and decoding time, together with the peak GPU memory usage, of different methods at resolutions of $512\times768$, 1080p, 2K, and 4K.
We use the same hardware and experimental settings as in the main paper.
The results are summarized in Table~\ref{tab:high_resolution_time}.

Although EF-LIC already shows a substantial advantage at $512\times768$ as reported in the main text, this margin further increases as the resolution grows.
At 4K resolution, EF-LIC achieves a decoding speed close to $15\times$ that of MS-ILLM.
Moreover, when the resolution increases from $512\times768$ to 1080p, the encoding time of EF-LIC and EF-LIC-s changes only slightly.
This is because the RVQ nearest neighbor search has a low complexity on GPU, so increasing the resolution has little impact on its runtime. 
While the remaining convolutional modules scale approximately as $O(n)$ ($n$ denotes the number of pixels), which makes them become the latency bottleneck for compressing high resolution images.
At lower resolutions, RVQ accounts for most of the encoding time, but as the resolution increases, the convolutional components gradually become the dominant cost, which results in a relatively small increase in the overall encoding time.
This also explains why EF-LIC exhibits larger speed advantages on higher resolution images.

\begin{figure*}[t]
  \centering
  \includegraphics[width=0.97\linewidth]{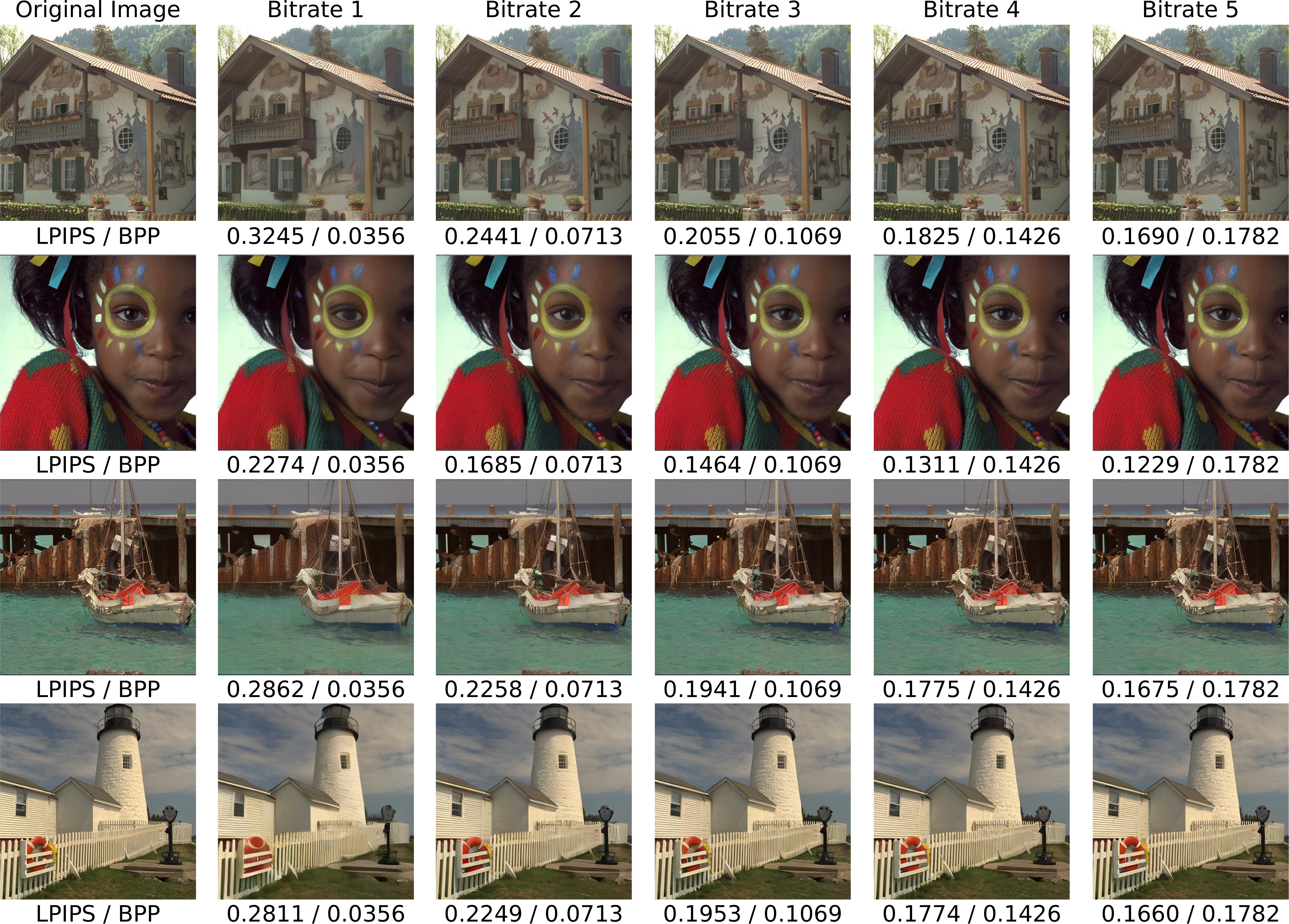}
  \caption{Qualitative results of EF-LIC at different bitrates on Kodak.  The bitrate increases from left to right.}
  \label{fig:vis_1}
\end{figure*}

\subsection{Comparison with Advanced Entropy Coding}
In the main paper, we use the rANS \cite{rans} implementation provided by CompressAI \cite{compressai} because it has been widely adopted in most LIC \cite{lic-balle2018variational,lic-joint,lic-elic,lalic,hpcm}.
We compare against the stronger entropy coding implementations in DCVC-RT \cite{dcvc-rt}.
The results are also included in \cref{tab:high_resolution_time}.
At $512\times768$, DCVC-RT is slightly faster than EF-LIC in encoding.
However, their entropy coder remains an $O(n)$ operation, and its runtime increases substantially as the resolution grows.
Consequently, EF-LIC becomes notably faster than DCVC-RT at 1080p, and the advantage further widens at 4K.

\subsection{Quantitative Results for Other Metrics}
\label{app:others}

Although we report the performance of EF-LIC under LPIPS~\cite{lpips} and DISTS~\cite{dists} in the main paper, we provide R--D curves measured by PSNR, MS-SSIM~\cite{ms-ssim}, FID~\cite{fid}, KID~\cite{kid}, NIQE~\cite{niqe}, MUSIQ~\cite{musiq}, and CLIP-IQA~\cite{clipiqa} in this section to verify that the LPIPS and DISTS improvements do not come at the cost of distortion-based quality.
As shown in \cref{fig:other_RD-curve} and \cref{fig:other_RD-curve2}, EF-LIC achieves comparable performance to the competing methods under these metrics.

Moreover, we do not emphasize FID~\cite{fid} in the main paper because we find that, while FID reflects the realism of generated images, it does not directly measure the similarity between a reconstruction and its corresponding source image.
As illustrated in \cref{fig:vis}, methods based on Stable Diffusion~\cite{stable-diffusion} can produce visually realistic images, but their content can differ substantially from the original images, which leads to a large FID in our evaluation.
Since our goal is image compression rather than image generation, preserving fidelity to the original content is essential, and we therefore primarily report LPIPS and DISTS in the main paper.

Following GLC~\cite{vq-gic}, we adopt their evaluation methodology for FID~\cite{fid} and KID~\cite{kid}. This protocol crops images into $256 \times 256$ non-overlapping patches to significantly augment the sample size, thereby ensuring a more accurate and robust calculation of both metrics. While this approach aligns more closely with mainstream evaluation paradigms in recent works \cite{vq-gic,vq-DLF,diffusion-StableCodec,diffusion-xue2025one}, it deviates from the configuration we previously reported in the rebuttal.

\begin{figure*}[t]
  \centering
  \includegraphics[width=0.995\linewidth]{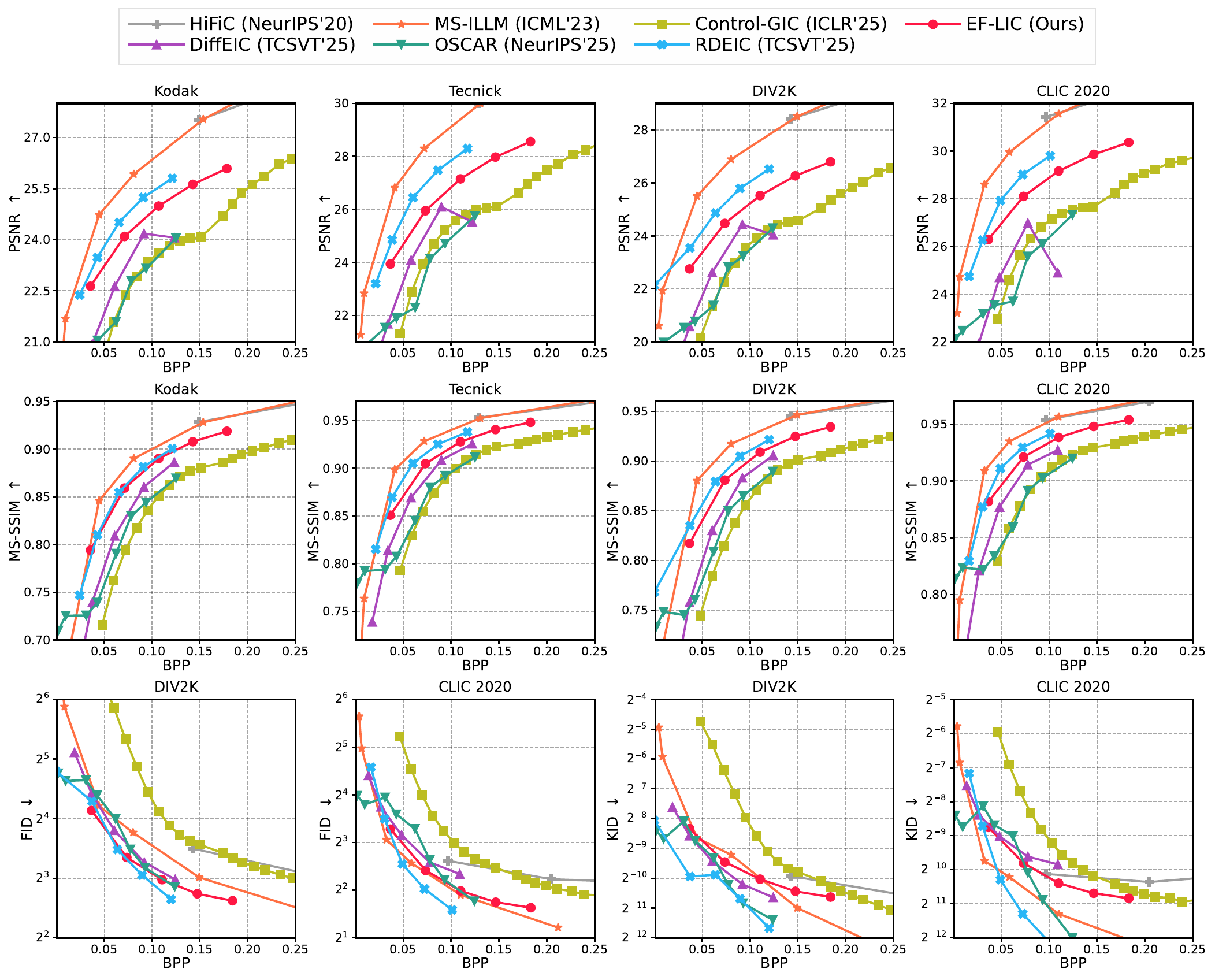}
  \caption{R--D performance on the Kodak, Tecnick, DIV2K, and CLIC2020 datasets, evaluated with PSNR, MS-SSIM, FID and KID vs. BPP.}
  \label{fig:other_RD-curve}
\end{figure*}

\begin{figure*}[t]
  \centering
  \includegraphics[width=0.995\linewidth]{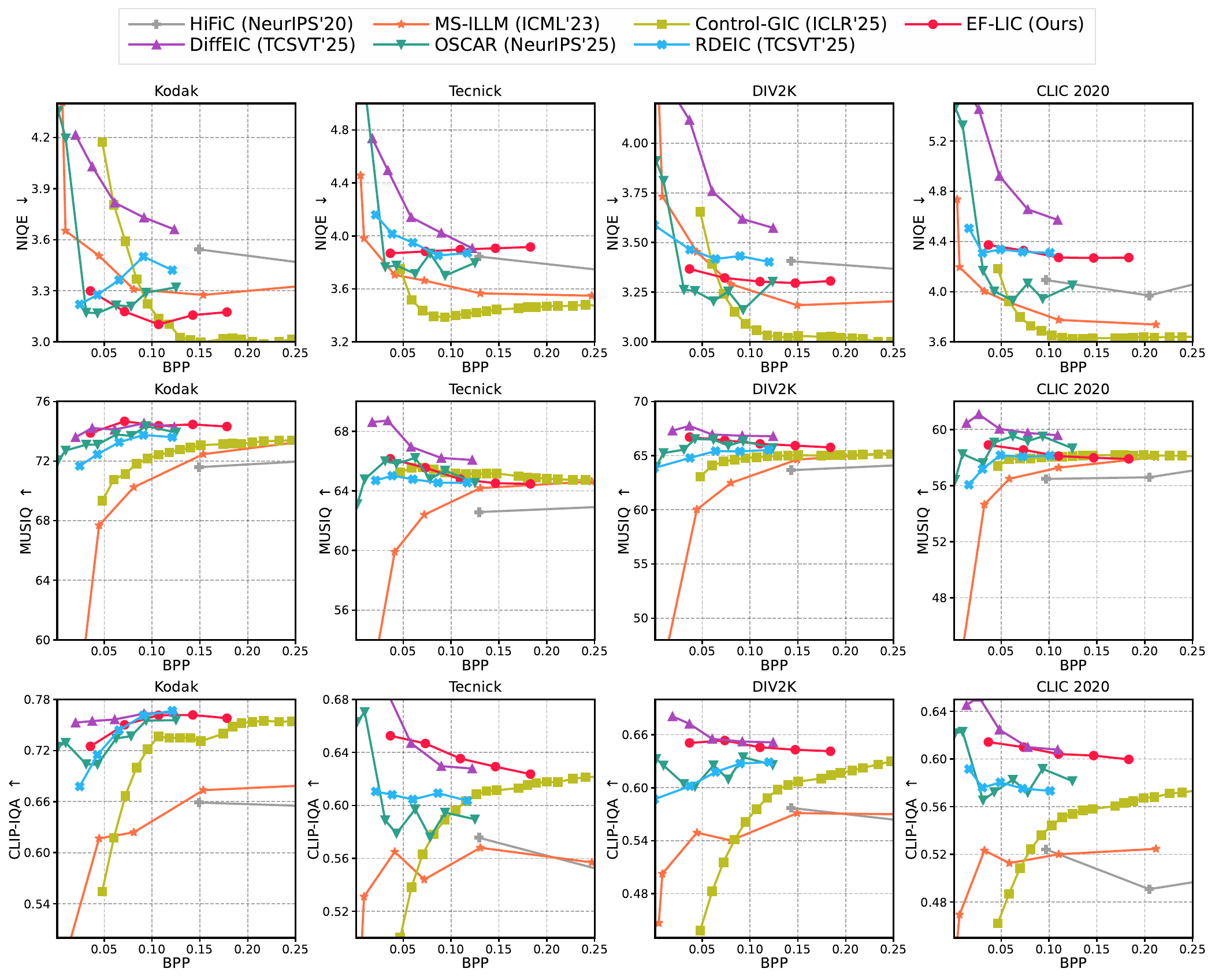}
  \caption{R--D performance on the Kodak, Tecnick, DIV2K, and CLIC2020 datasets, evaluated with NIQE, MUSIQ and CLIP-IQA vs. BPP.}
  \label{fig:other_RD-curve2}
\end{figure*}

\subsection{More Visualization Results}

In this section, we provide additional visualization results of EF-LIC.
\cref{fig:vis_1} presents qualitative results of EF-LIC at different bitrates, showing that EF-LIC effectively supports multi-rate compression.
We further present qualitative results of EF-LIC on the high-resolution Tecnick \cite{tecnick}, DIV2K \cite{div2k}, and CLIC 2020 \cite{clic2020} datasets in \cref{fig:vis_3,fig:vis_4,fig:vis_2,fig:vis_5}.
Although the qualitative results of different models on high-resolution images appear similar, we include them to demonstrate that EF-LIC also functions correctly on high-resolution inputs.

\begin{figure*}[t]
  \centering
  \includegraphics[width=0.85\linewidth]{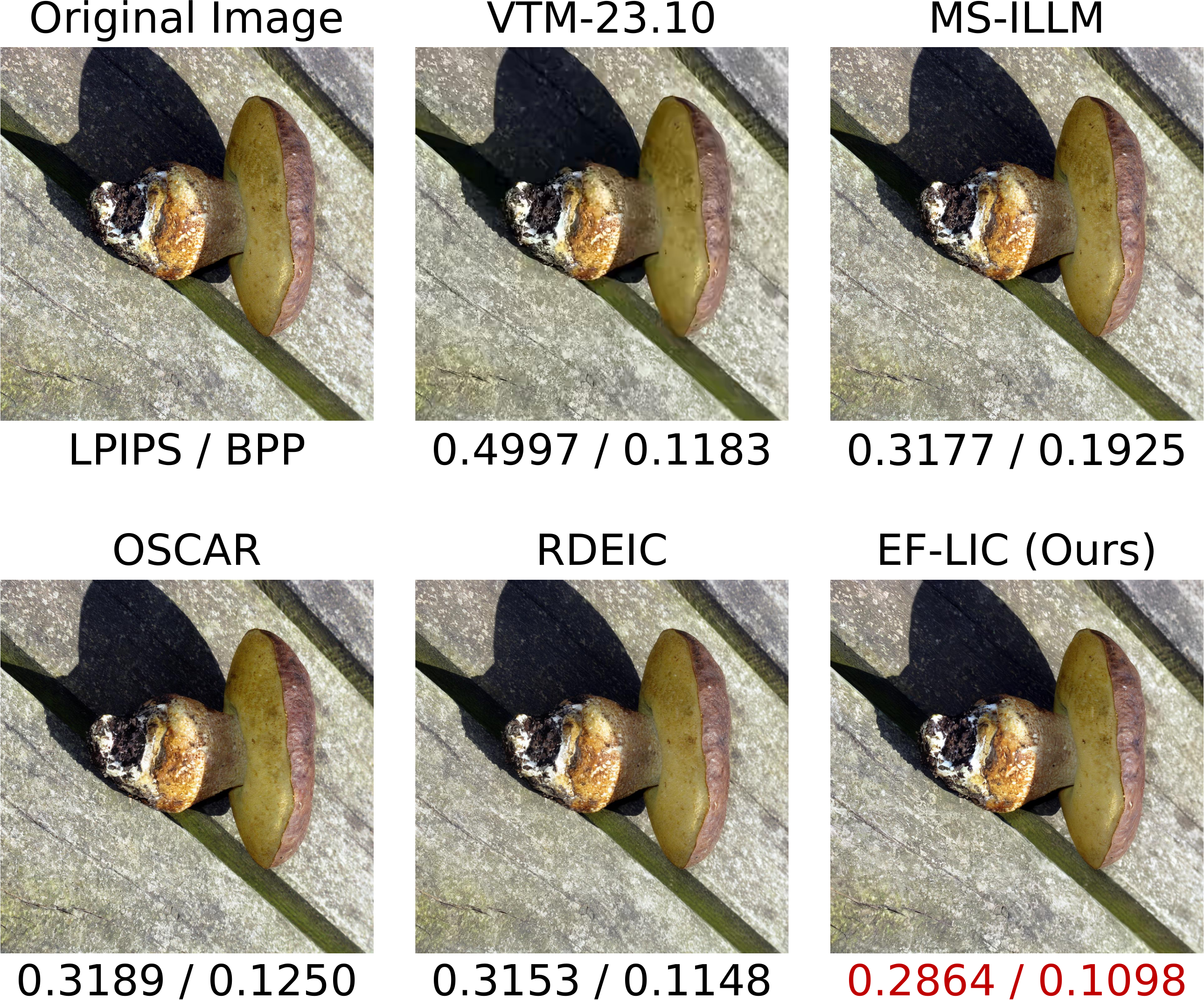}
  \caption{Visual comparison on Tecnick \cite{tecnick}. Numbers are LPIPS/BPP. Lower values indicate better visual quality and higher compression.}
  \label{fig:vis_3}
\end{figure*}

\begin{figure*}[t]
  \centering
  \includegraphics[width=0.90\linewidth]{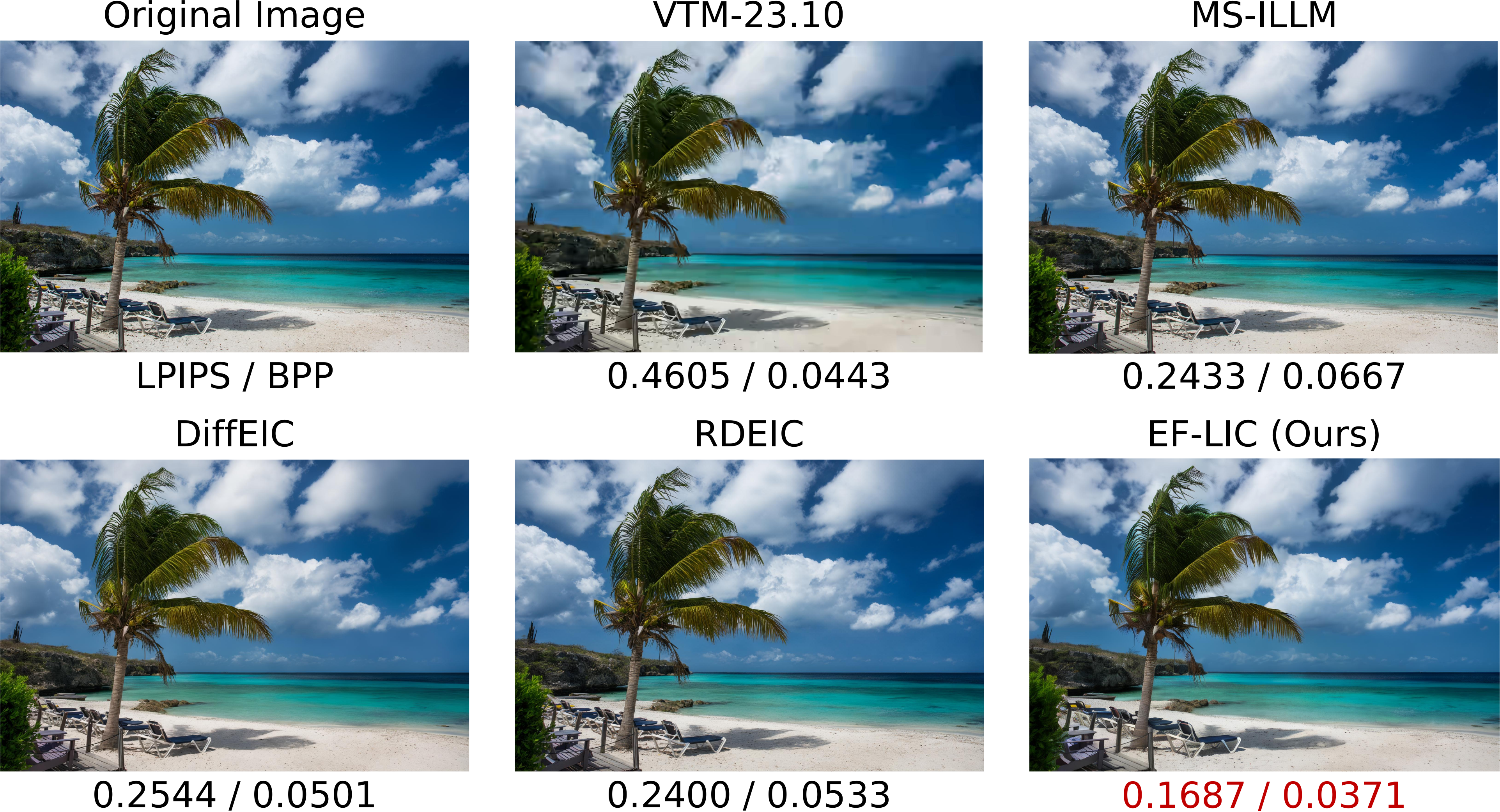}
  \caption{Visual comparison on DIV2K \cite{div2k}. Numbers are LPIPS/BPP. Lower values indicate better visual quality and higher compression.}
  \label{fig:vis_4}
\end{figure*}

\begin{figure*}[t]
  \centering
  \includegraphics[width=0.85\linewidth]{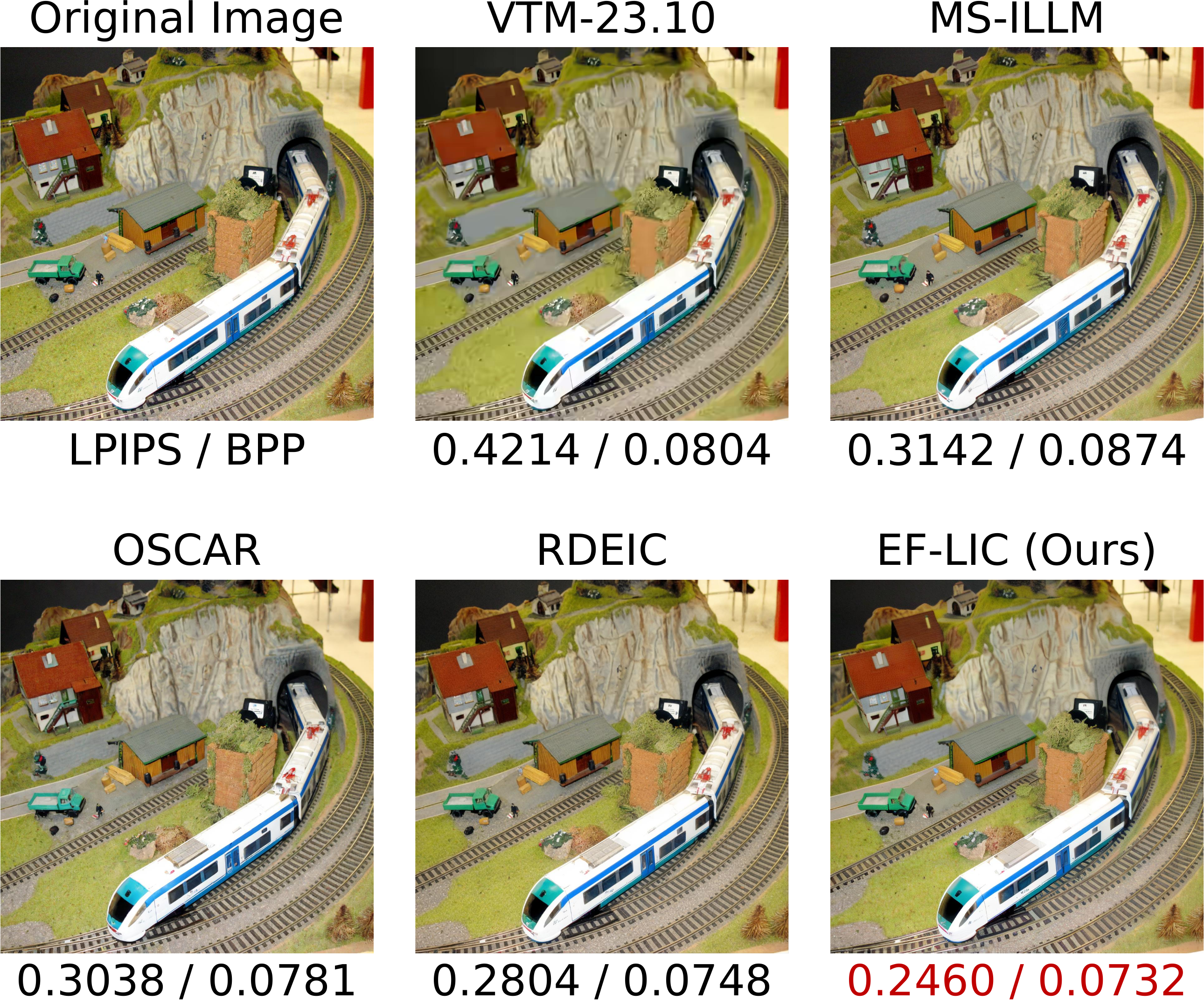}
  \caption{Visual comparison on Tecnick \cite{tecnick}. Numbers are LPIPS/BPP. Lower values indicate better visual quality and higher compression.}
  \label{fig:vis_2}
\end{figure*}

\begin{figure*}[t]
  \centering
  \includegraphics[width=0.95\linewidth]{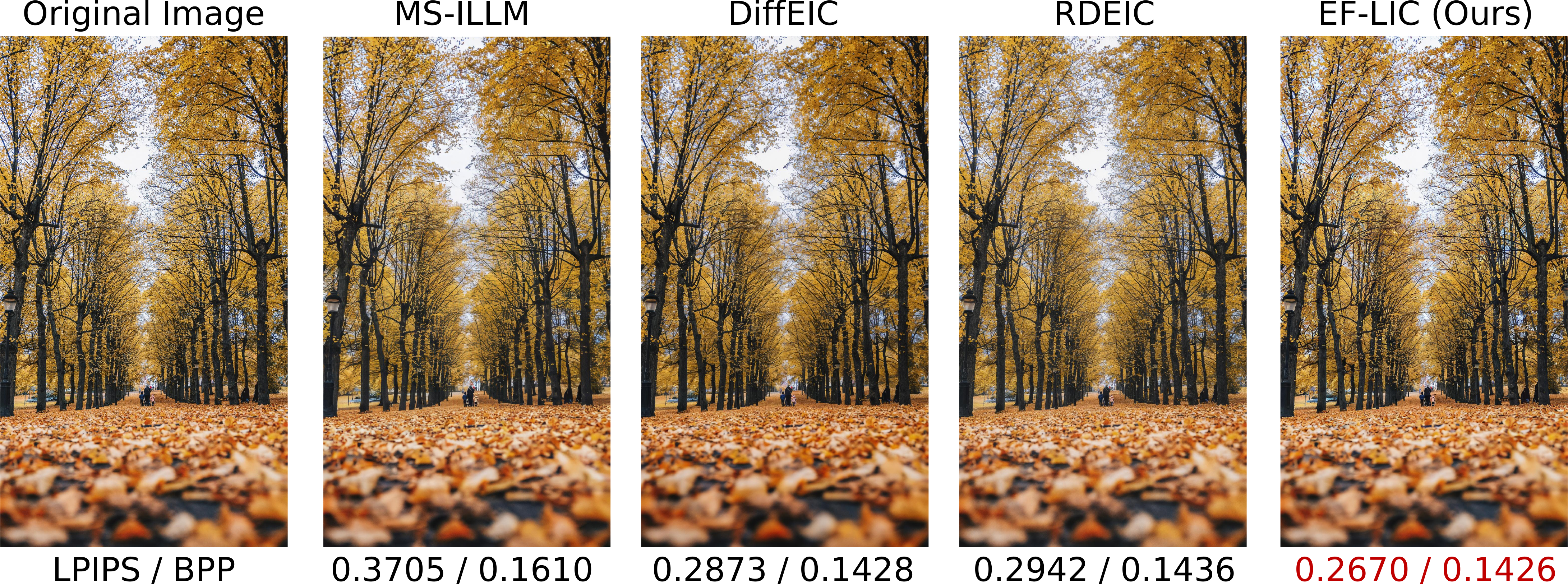}
  \caption{Visual comparison on CLIC 2020 \cite{clic2020}. Numbers are LPIPS/BPP. Lower values indicate better visual quality and higher compression.}
  \label{fig:vis_5}
\end{figure*}

\end{document}